\newcommand{\href}[1]{#1} % does nothing, but defines the command so the
\let\origdoublepage\cleardoublepage
\newcommand{\clearemptydoublepage}{%
  \clearpage{\pagestyle{empty}\origdoublepage}}
\let\cleardoublepage\clearemptydoublepage
\DeclareMathOperator{\fid}{F}
\newenvironment{mylist}[1]{\begin{list}{}{
	\setlength{\leftmargin}{#1}
	\setlength{\rightmargin}{0mm}
	\setlength{\labelsep}{2mm}
	\setlength{\labelwidth}{8mm}
	\setlength{\itemsep}{0mm}}}
	{\end{list}}
\newtheorem{theorem}{Theorem}
\newtheorem{lemma}[theorem]{Lemma}
\theoremstyle{definition}
\newtheorem{definition}[theorem]{Definition}
\newcommand{\ket}[1]{|#1\rangle}
\newcommand{\op}[1]{\operatorname{#1}}
\newcommand{\reg}[1]{\mathsf{#1}}
\newcommand{\lin}[1]{\setft{L}\left(#1\right)}
\newcommand{\tr}{\operatorname{Tr}}
\newcommand{\tinyspace}{\mspace{1mu}}
\newcommand{\norm}[1]{\left\lVert\tinyspace #1 \tinyspace\right\rVert}
\newcommand{\ip}[2]{\left\langle #1 , #2\right\rangle}
\newcommand{\setft}[1]{\mathrm{#1}}
\newcommand{\density}[1]{\setft{D}\left(#1\right)}
\newcommand{\herm}[1]{\setft{Herm}\left(#1\right)}
\newcommand{\pos}[1]{\setft{Pos}\left(#1\right)}
\newcommand{\channel}[1]{\setft{C}\left(#1\right)}
\newcommand{\pd}[1]{\setft{Pd}\left(#1\right)}
\def\X{\mathcal{X}}
\def\Y{\mathcal{Y}}
\def\Z{\mathcal{Z}}
\def\W{\mathcal{W}}
\newcommand{\floor}[1]{\left\lfloor #1 \right\rfloor}
\begin{document}

\pagestyle{empty}
\pagenumbering{roman}

% The contents of the title page are specified in the "titlepage"
% environment.
\begin{titlepage}
        \begin{center}
        \vspace*{1.0cm}

        \Huge
        {\bf Parallel Repetition of Prover-Verifier Quantum Interactions}

        \vspace*{1.0cm}

        \normalsize
        by \\

        \vspace*{1.0cm}

        \Large
        Abel Molina \\

        \vspace*{3.0cm}

        \normalsize
        A thesis \\
        presented to the University of Waterloo \\ 
        in fulfillment of the \\
        thesis requirement for the degree of \\
        Master of Mathematics \\
        in \\
        Computer Science and Quantum Information\\

        \vspace*{2.0cm}

        Waterloo, Ontario, Canada, 2011 \\

        \end{center}
\end{titlepage}

% The rest of the front pages should contain no headers and be
% numbered using roman numerals starting with `ii'
\pagestyle{plain}
\setcounter{page}{2}

%% D E C L A R A T I O N   P A G E
% %-------------------------------
%  % The following is the sample Delaration Page as provided by the GSO
%  % December 13th, 2006.  It is designed for an electronic thesis.
%  \noindent
%  I hereby declare that I am the sole author of this thesis.  This is a true copy of the thesis, including any required final revisions, as accepted by my examiners.
%
%  \bigskip
%  
%  \noindent
%  I understand that my thesis may be made electronically available to the public.
%  
% A B S T R A C T
% ---------------

\cleardoublepage

\begin{center}\textbf{Abstract}\end{center}

In this thesis, we answer several questions about the behaviour of prover-verifier interactions under parallel repetition when quantum information is allowed, and the verifier acts independently in them. 

We first consider the case in which a value is associated with each of the possible outcomes of an interaction. We prove that it is not possible for the prover to improve on the optimum average value per repetition by repeating the protocol multiple times in parallel.

We look then at games in which the outcomes are classified into two types, winning outcomes and losing outcomes. We ask what is the optimal probability for the prover of winning at least $k$ times out of $n$ parallel repetitions, given that the optimal probability of winning when only one repetition is considered is $p$. A reasonable conjecture for the answer would be $\sum_{m \geq k}  {n \choose m} p^m (1-p)^{n-m}$, as that is the answer when it is optimal for the prover to act independently. This is known to be the correct answer when $k=n$, and also in the classical case. It is also correct in some generalizations of the classical case that we will discuss later. We will show how this cannot be extended to all cases, presenting an example of an interaction with $k=1,n=2$ in which $p\approx 0.85$, but it is possible to always win at least once. We will then give some upper bounds on the optimal probability for the prover of winning $k$ times out of $n$ parallel repetitions. These bounds are expressed as a function of $p$.

Finally, we will connect our results to the study of error reduction for quantum interactive proofs using parallel repetition.

\newpage

\cleardoublepage

% A C K N O W L E D G E M E N T S
% -------------------------------

\begin{center}\textbf{Preface}\end{center}

The work presented in this thesis has been performed while I was enrolled as a Masters student at the University of Waterloo. This work has been performed under the supervision of John Watrous.

I would like to thank my friends for all I have learned from them along these years, and for keeping me relatively sane.

I would like to thank my family for supporting me, specially when I decided to transfer abroad during the middle of my undergraduate degree.

I would like to thank John Watrous for all his support, specially for giving  me the opportunity to learn from him when I approached him as an undergraduate student that did not know much.

I would like to thank Ray Laflamme and Michele Mosca for creating the Institute for Quantum Computing, and make it the leading institution in research that it is today. I would like as well to thank Mike and Ophelia Lazaridis for their continued support for the institution.

Finally, I  would like to thank the readers of the thesis, Richard Cleve and Ashwin Nayak, for their time and their useful comments. I would also like to thank Gus Gutoski for his help with figures 3.1 and 3.2 in this thesis,  David Rhee for his help at analyzing the objective value of the last dual solution in Chapter 6, and Thomas Vidick for suggesting the first of the four open questions mentioned in the conclusion.

\newpage

% T A B L E   O F   C O N T E N T S
% ---------------------------------
\tableofcontents
\newpage

% L I S T   O F   T A B L E S
% ---------------------------
%\listoftables
%\addcontentsline{toc}{chapter}{List of Tables}
%\newpage

% L I S T   O F   F I G U R E S
% -----------------------------
%\listoffigures
%\addcontentsline{toc}{chapter}{List of Figures}
%\newpage

% L I S T   O F   S Y M B O L S
% -----------------------------
% \renewcommand{\nomname}{Nomenclature}
% \addcontentsline{toc}{chapter}{\textbf{Nomenclature}}
% \printglossary
% \newpage

\pagenumbering{arabic}

% For a large document, it might be a good idea to divide your thesis
% into several files each one containing, for example, one chapter.
% To illustrate this idea, the "front pages" (i.e., title page,
% declaration, borrowers' page, abstract, acknowledgements,
% dedication, table of contents, list of tables, list of figures,
% nomenclature) are contained within the file "uw-ethesis-frontpgs.tex" which is
% included into the document by the following statement.
%----------------------------------------------------------------------
% FRONT MATERIAL
%----------------------------------------------------------------------

%----------------------------------------------------------------------
% MAIN BODY
%----------------------------------------------------------------------

%%%

\chapter{Introduction} \label{ch:intro}

We will give now an abstract description of the kind of interaction that we consider, without giving details about the corresponding underlying theories or mathematical structures. In our prover-verifier interaction, one individual (the prover) subjects another individual (the verifier) to a test. Following the standard convention for two-party interactions in quantum information, we will call them Alice and Bob, respectively. They could also be named Arthur and Merlin, following the convention in computational complexity for prover-verifier interactions. Of course, Alice and Bob might correspond 
to devices instead of individuals in a real life instance of this kind of interaction.

The interaction corresponding to our tests is of the following form:
\begin{mylist}{\parindent}
\item[1.]
Alice prepares a \emph{question} and sends it to Bob.

\item[2.]
Bob responds by sending an \emph{answer} to Alice.

\item[3.]
The previous steps are repeated an arbitrary number of times. At any point, Alice and Bob can use whatever memory they have of the interaction to determine what question or answer to send.

\item[4.]

Based on the last answer from Bob, as well as whatever memory Alice has of Bob's previous answers and her own questions, Alice assigns an outcome to the test.

\end{mylist}

An interaction is specified by the process by which Alice operates. We assume that Bob has access to a complete description of this process. In a classical setting, the messages exchange between Alice and Bob are purely classical, that is, they can be modelled as a sequence of bits. The process by which Alice operates can then be modelled as  a given probabilistic process, where the questions are selected from some probability distribution conditioned on previous parts of the interaction, and the final decision might also involve the use of randomness. In the quantum case, Alice's questions might take the form of quantum information, and so might the answers that she expects from Bob. The process by which Alice operates is then at each step a given map from quantum states to quantum states. This process transforms Alice's memory and the last answer he received from Bob to Alice's next question and the next state of her memory. Note that this implies that Alice's questions can be entangled with Alice's memory.

The behaviour of Bob is not part of the description of the test. Indeed, the questions we explore in this thesis are mostly concerned with looking at what behaviour is desirable for Bob in different cases. Typically Bob is allowed to perform an arbitrary probabilistic process in the classical case, and an arbitrary quantum process in the quantum case. In the same way as the process for Alice, this process can be conditioned on previous parts of the interaction, and in the quantum case Bob can entangle his answers with his memory.

Note also that there is no loss of generality involved in assuming that the protocol begins with a message from Alice. This is because we can simulate a similar protocol in which the first message is sent by Bob with a protocol of the kind described here in which the first message is sent by Alice, and it is an empty message.

The formalism necessary to study these interactions in a rigorous way is presented in Chapters \ref{ch:back} and \ref{ch:sdp}. Chapter \ref{ch:back} presents some useful linear algebra, optimization and quantum information facts and terms. Chapter \ref{ch:sdp} shows how these can be applied to obtain a quantitative description of the interactions that we are studying.

As the original results of this thesis, we answer several questions related to the repetition in parallel of these interactions. They follow the theme of looking at the optimality for Bob of treating different repetitions of an interaction independently when the interaction is repeated in parallel. That is, we consider the case in which Alice instantiates $n$ independent copies of her test: she follows exactly the same procedure in all of the $n$ parallel repetitions when determining what questions to send. The processes followed to determine the outcome of the interactions are completely independent as well.

There are several questions that one might ask  concerning what is Bob's optimal behaviour when a protocol is repeated several times in this way, depending on what does Bob want to optimize. In Chapter \ref{ch:avg}, we consider the setting in which a value is assigned by Bob to each of the outcomes, letting $v$ denote the best expected value that he can obtain as the outcome of an interaction. Formally speaking, and without a reference to any particular mathematical model for our interaction, this is the supremum over all possible processes by which Bob can operate of the expected value corresponding to the outcome of the protocol, when Bob follows that particular process. In both the classical and quantum models, the supremum will always be achieved, so that it may safely be replaced by the maximum.

Now, we consider the case in which Bob is trying to maximize the sum of the values obtained from $r$ repetitions of an interaction.  We ask then the question:

\begin{mylist}{\parindent}
\item[\ ] What is the optimum expected value per repetition that can be obtained for Bob when he considers all of the $n$ interactions?
\end{mylist}

One might think that given the fact that Alice is instantiating independently the copies of her test, the answer to this question is $v$, as this is the answer when Bob acts independently in the different repetitions. As we prove, this is indeed the correct answer.

In Chapter \ref{ch:nindep}, we look at the behaviour for Bob when he only cares about obtaining certain outcomes. Then, for a fixed choice of AliceÕs test and a particular choice of outcome, let $p$ denote the optimal probability for Bob of obtaining one of those outcomes. We identify these outcomes as the ``winning" outcomes. In the same way as in the definition of $v$, $p$ is more formally defined as the supremum of the probability that Bob achieves a winning outcome over all the processes by which Bob can operate. In the same way again as we have for $v$, in both our classical and quantum models this supremum can be safely replaced by a maximum.

When Bob is trying to optimize the average number of repetitions in which he obtains a winning outcome, the best he can do is to play independently his optimal strategy for achieving a winning outcome. This can be seen from assigning value $1$ to the winning outcomes and value $0$ to all other outcomes, and considering our result in Chapter \ref{ch:avg}. However, we can also consider the case in which Bob is not concerned with optimizing the average number of repetitions in which he obtains a winning outcome, but rather  with making sure that the number of repetitions in which he obtains a winning outcome is above a certain threshold. We ask then the question:

\begin{mylist}{\parindent}
\item[\ ] What is the optimum probability for Bob of achieving a winning outcome in at least $k$ of the  $n$ interactions?
\end{mylist}

Following the same reasoning as in the previous question, one might think that given the fact that Alice is instantiating independently the copies of her test, the answer to this question is $\sum_{ k \leq t \leq n}  {{n}\choose{t}} p^t (1-p)^{n-t}$. The reason is that this is the answer when Bob acts independently in the different repetitions. 

This is indeed the correct answer in the classical case. This can be proved from the observation that an optimal strategy for Bob in a classical model is always deterministic, and we will discuss later how it also follows as a special case of our semidefinite programming analysis in the quantum case. We will also see how this analysis can actually be extended to a few other cases, including the ones for the semidefinite programs in \cite{MolinaVW12}.

It is also known  that in the special case in which $k=n$, the answer to this question is indeed $\sum_{ k \leq t \leq n}   {{n}\choose{t}}p^t (1-p)^{n-t}$, which in this case equals $p^n$. 
In what is probably the most significant contribution in this thesis, we show how $\sum_{ k \leq t \leq n} {{n}\choose{t}} p^t (1-p)^{n-t}$ is in fact not in general the correct answer to this question. First, we show how the proof for the case in which $k=n$ fails to be generalized in a straightforward way in this case. Then, we give an explicit example of a test in which Bob can pass at least one of two repetitions with probability $1$, despite the fact that $p<1$. In our example, Bob's optimal probability of winning for a single repetition of the interaction is $\cos^2(\pi/8) \approx 0.85$. 
 
The ability of Bob to correlate his answers can be seen as a form of hedging, as we illustrate in a highly fictitious scenario.  In our scenario, Bob is offered the opportunity to take part in two potentially very lucrative but involving some risks games of chance, organized by Alice. These two games are completely identical to each other, and run independently. To earn the right to play in each of the games, Bob must contribute \$1 million of his own money, and he has an 85\% chance of winning if he plays optimally. For each game he wins, Bob receives a price of \$3 million, with a total \$2 million gain over his initial investment. If Bob does not win, he loses his \$1 million investment.  

Many people, if put in the place of Bob, would not hesitate to play both of the games, even taking out a \$2 million loan if necessary to do so. The expected gain from each of the games is \$1,550,000, and the only time that Bob loses money as an overall result is when Bob loses in both of the games. If we treat the games independently, the chance for a loss in both is 2.25\%. However, Bob could be a highly risk-averse person. He would greatly enjoy being a millionaire, but cannot or does not want to risk a 2.25\% chance of losing \$2 million. If the games run by Alice can be modelled classically, there is no way Bob can avoid this risk. However, if the two games have a model using quantum information with the same properties as the one in our example, Bob can be guaranteed to win in at least one of the games, and therefore obtain at least a total \$1 million gain. A choice of an appropriate quantum strategy allows Bob to hedge his bets perfectly.
  
There are other settings in which quantum effects that are not possible in the classical world have been discovered to be possible in an interaction between two parties that allows for quantum behaviour. However, our setting differs from some of the best-known such situations, such as the CHSH game \cite{ClauserH69} and the Mermin-Peres magic squares game \cite{Mermin90,Peres90}. In our setting, we do not have two parties collaborating to achieve a non-classical outcome. Instead, we have a prover-verifier setting, in which Bob is trying to convince Alice in order to achieve the winning outcome for an interaction.

In Chapter \ref{ch:ub}, we continue examining the same question as in Chapter \ref{ch:nindep}. As we said, we establish in Chapter \ref{ch:nindep} that it is not necessarily optimal for Bob to play independently when he is trying to win in at least a certain number of interactions. However, it still seems reasonable to think that how well Bob can do when he is trying to win in at least a certain number of repetitions should be somehow related to how well Bob can do when the interaction only occurs once, and he is trying to obtain the winning outcome. For example, it is clear that if Bob can make sure that he wins when the interaction only occurs once, then he is capable of making sure that he wins in at least a certain number of repetitions (since in fact, he can make sure that he wins in all repetitions). It is also possible to prove that if Bob does not have any chance of winning when the interaction only occurs once, then he does not have any chance of winning any number of interactions larger than zero when the game is played several times. This follows as a special case of our analysis for the quantum case. It can also be proved by contradiction starting from the observation that when only one repetition of the interaction is considered, Bob could simulate the setting in which several interactions are repeated in parallel, by simulating what would be the actions of Alice in the fictional copies of the interaction. 

It is then a reasonable aim to obtain general quantitative relations that express this idea. With this goal in mind, we try to upper bound the optimum probability for Bob of achieving the winning outcome in at least $k$ of the  $n$ interactions as a function of $p$. We will see how it is not hard to obtain from our formalization an upper bound of $\sum_{ k \leq t \leq n}  {{n}\choose{t}} p^t$. Using a more involved analysis, we obtain an improved upper bound of $p^k  {{n} \choose {k-1} }$.

In Chapter  \ref{ch:error}, we apply the results from the previous section to the study of error reduction for quantum interactive proof systems. These, generally speaking, are a particular case of the kind of interaction that we consider here. In this new situation, there is a \emph{string} $x$ known to Alice and Bob, which might or might not be a member of a \emph{language} $L$.

We also have an interaction of the form that we consider in our work, such that whenever $x \in L$ Bob can pass the test with probability at least $\alpha$, while whenever $x \notin L$ Bob can pass the test with probability at most $\beta < \alpha$.  Note that the fact that the value of $x$ is known to Alice and Bob implies that they can use this value to make decisions during their operation.

Assuming Bob is playing to maximize his chance of passing, Alice can then use the outcome of the test to make a guess about whether $x \in L$ or not. We can see that it is easy to make a guess that will be correct with high probability whenever $\alpha$ is close to $1$ and $\beta$ is close to $0$. Error reduction corresponds then to obtaining another test  with smaller $\beta$ and larger $\alpha$. In a natural conjecture for a possible way of reducing error, this new better test simply consists of a number of independent instantiations of the original test. The new test accepts if and only if some suitably chosen fraction of these independent tests (e.g. $\frac{\alpha+\beta}{2}$)  lead to Bob passing the test. This would improve on the more complicated strategy for reducing error in this situation that is known in the literature \cite{JainUW09}. 

If if was true that it is optimal for Bob to answer independently, that would easily prove the correctness of this natural strategy to reduce error. Indeed, under this assumption, the number of repetitions with a winning outcome when Bob plays optimally is described as  a binomial distribution parametrized by $p$ and $n$. Using the properties of the binomial distribution (e.g. using a Chernoff Bound), it would be then possible to prove that the probability that the new test produces a wrong guess about whether $x \in L$ decreases exponentially fast as a function of $n$.

Unfortunately, our results in Chapter  \ref{ch:nindep} shows that a proof method that uses the optimality of independent answers for Bob to prove the correctness of the natural strategy to reduce error would start with an incorrect assumption. On the other hand, maybe it is possible to prove the  correctness of the natural strategy to reduce error while replacing that incorrect assumption about Bob's optimal behaviour with a weaker one.  We will show how this is indeed the case for a limited range of values of $\alpha$ and $\beta$ (more exactly, whenever 
$\beta < 2^{-\frac{H(\alpha)}{\alpha}} < \alpha$), using our results from Chapter \ref{ch:ub}.

\newpage

\chapter{Background}\label{ch:back}

In this section we provide a summary of the mathematical background needed to develop the content of this thesis. Its main purpose is to unify the notation for the content of this thesis, and not to be completely exhaustive, but just to highlight concepts that might be less familiar to some readers. 

\section{Linear algebra}

We establish here the notation for linear algebra terms that will be used in this thesis. We assume familiarity with basic linear algebra concepts such as Hilbert spaces, positive semidefiniteness and tensor products. 
For any finite-dimensional complex Hilbert space $\X$ we write
$\lin{\X}$ to denote the set of linear operators acting on $\X$, we write $\mathbb{I}_{\X}$ to denote the identity
operator acting on $\X$,
we write $\herm{\X}$ to denote the set of Hermitian operators acting
on $\X$, we write $\pos{\X}$ to denote the set of positive
semidefinite operators acting on $\X$, and we write $\pd{\X}$ to denote the set of positive definite operators acting on $\X$. We write $\density{\X}$ to denote the set of density operators (positive semidefinite operators with unit trace) acting on $\X$. 

For Hermitian operators $A,B\in\herm{\X}$ the notations $A\geq B$ and
$B\leq A$ indicate that $A - B$ is positive semidefinite, and the
notations $A > B$ and $B < A$ indicate that $A - B$ is positive definite.

An inner product can be given to $\lin{\X}$, defined as  $\ip{A}{B} = \tr(A^{\ast}B)$. 
If $A,B\in\herm{\X}$, it holds that
$\ip{A}{B}$ is a real number and satisfies $\ip{A}{B} = \ip{B}{A}$.
For every choice of finite-dimensional complex Hilbert space $\X$ and
$\Y$, and for a given linear mapping of the form
$\Phi:\lin{\X}\rightarrow\lin{\Y}$, there is a unique mapping
$\Phi^{\ast}:\lin{\Y}\rightarrow\lin{\X}$ (known as the \emph{adjoint}
of $\Phi$) that satisfies
$\ip{Y}{\Phi(X)} = \ip{\Phi^{\ast}(Y)}{X}$ for all $X\in\lin{\X}$ and
$Y\in\lin{\Y}$.

We write the tensor product of a Hilbert space $\X$ with itself $n$ times, $\X \otimes \X \ldots \otimes \X$, as $\X^{\otimes n}$. We often consider several Hilbert spaces, all denoted by a common symbol (say $\X$), but with different subindices, corresponding to natural numbers in some range. We write $\X_{i \ldots j}$ to denote  the tensor product $\X_i \otimes \ldots  \otimes \X_j$  of a series of these spaces spanned by a sequence of consecutive subindices going from $i$ to $j$, inclusive.
 
During our exposition, we slightly abuse notation by identifying the tensor product of several Hilbert spaces with their tensor product in a different order. For example, we might write something like $\tr_{\Y_2} (P) = \mathbb{I}_{X_2} \otimes Q$, where $P \in \lin{\X_{1 \ldots 2} \otimes \Y_{1 \ldots 2}}$ and $Q \in \lin{\X_1 \otimes \Y_1}$.   We implicitly assume then that one of the sides in our equality is conjugated with an unitary operator that appropriately transposes the order of the spaces in the tensor product (for example, in this case we might conjugate $\tr_{\Y_2} (P)$ with the operator that sends $  x \otimes z \otimes y$ to $z \otimes  x \otimes y$ for all $x \in \X_1, y \in \Y_1, z \in X_2$ ).

We define now the Choi-Jamio{\l}kowski representation of a linear mapping from $\lin{\X}$ to $\lin{\Y}$. To do so, suppose $\op{dim}(\X) = n$ and assume that a standard orthonormal basis
$\{v_i : 1 \leq i \leq n \}$ of $\X$ has been selected. With respect to this basis, one defines the Choi-Jamio{\l}kowski
operator $J(\Phi)\in\lin{\Y\otimes\X}$ of a linear mapping
$\Phi:\lin{\X}\rightarrow\lin{\Y}$ as
\[
J(\Phi) = \sum_{1\leq i,j \leq n}
\Phi(v_iv_j^*) \otimes v_iv_j^*
\]
The mapping $J$ is a linear bijection from the space of mappings of
the form $\Phi:\lin{\X}\rightarrow\lin{\Y}$ to the operator space
$\lin{\Y\otimes\X}$.
It has the property that that $\Phi$ is completely positive if and only if
$J(\Phi) \in \pos{\Y\otimes\X}$, and that $\Phi$ is trace-preserving
if and only if $\tr_{\Y}(J(\Phi)) = \mathbb{I}_{\X}$
\cite{Choi75,Jamiolkowski72}.

Some properties of the elements in our formalism that we will use very often are the following:

\begin{mylist}{\parindent}

\item[ 1.] If $A \in \lin{\X_1 \otimes \Y_1}$  and $B \in \lin{\X_2 \otimes \Y_2}$, then $\tr_{\X_1 \otimes \X_2} (A \otimes B) = \tr_{\X_1} (A) \otimes \tr_{\X_2} (B)$

\item[ 2.] $A \geq B$ and $C \geq D$ implies $A \otimes C \geq B \otimes D$ for any choice of positive semidefinite operators $A, B, C$ and $D$.

\item[ 3.] If we have two maps $\Phi_1:\lin{\X_1}\rightarrow\lin{\Y_1}$ and  $\Phi_2:\lin{\X_2}\rightarrow\lin{\Y_2}$, then $J(\Phi_1 \otimes \Phi_2) = J(\Phi_1) \otimes J(\Phi_2)$.

\end{mylist}

\section{Quantum Information}

We introduce now some concepts concerning our mathematical modelling of quantum information processing. See \cite{NielsenC00} for a comprehensive introduction to a formal treatment of quantum information processing.

A \emph{register} is a hypothetical device that stores quantum
information.
Associated with a register $\reg{X}$ is a finite-dimensional complex
Hilbert space $\X$, and each quantum state of $\reg{X}$ is described
by a density operator $\rho\in\density{\X}$.
\emph{Qubits} are registers for which $\dim(\X) = 2$.
A \emph{measurement} of $\reg{X}$ is described by a set of positive
semidefinite operators $\{P_a\,:\,a\in\Sigma\}\subset\pos{\X}$,
indexed by a finite non-empty set of measurement outcomes $\Sigma$,
and satisfying the constraint $\sum_{a\in\Sigma}P_a = \mathbb{I}_{\X}$ (the
identity operator on $\X$).
If such a measurement is performed on $\reg{X}$ while it is in the
state $\rho$, each outcome $a\in\Sigma$ results with probability
$\ip{P_a}{\rho}$.

We can also consider information stored across several registers. If these registers are associated with finite-dimensional complex Hilbert spaces $\X_1 \ldots \X_n$, the finite-dimensional complex Hilbert space associated with their joint state is $\X_{1 \ldots n}$. Their joint state is then described by a density operator $\sigma \in \density{\X_{1 \ldots n}}$.

A \emph{quantum channel} is a completely positive and trace-preserving
linear mapping of the form \mbox{$\Phi:\lin{\X}\rightarrow\lin{\Y}$}. This
describes a hypothetical physical process that transforms each state
$\rho$ of a register $\reg{X}$ into the state $\Phi(\rho)$ of another
register $\reg{Y}$.
The set of all channels of this form is denoted $\channel{\X,\Y}$.
The identity channel that does nothing to a register $\reg{X}$ is
denoted $\mathbb{I}_{\lin{\X}}$.

\section{Semidefinite programming}

Semidefinite programming is an area of optimization which has been extensively used within quantum information theory (see for example \cite{CleveS07, JainJ10, LeeM11} for uses in quantum complexity theory, \cite{Rains01, NavascuesP08} for uses in the study of entanglement, and \cite{AudenaertD02,CerfF06} for uses in the study of quantum cloning).  More comprehensive discussions of semidefinite programming can be found in \cite{VandenbergheB96,Lovasz03,deKlerk02,BoydV04}, for instance.  We provide here the basic definitions and theorems used in our work.

\begin{definition}
A semidefinite program is specified by complex finite-dimensional Hilbert spaces $\X$ and $\Y$, and operators $\Phi$, $A$ and $B$, where:

\begin{mylist}{\parindent}
\item[1.] 
$\Phi: \lin{\X} \rightarrow \lin{\Y}$ is a Hermiticity-preserving
  linear mapping, and
\item[2.] $A\in\herm{\X}$ and $B\in\herm{\Y}$ are Hermitian operators,
\end{mylist}
for some choice of finite-dimensional complex Hilbert spaces $\X$ and $\Y$.

We associate with these operators two optimization problems, called the \emph{primal} and \emph{dual} problems:

\begin{center}
  \begin{minipage}{2.6in}
    \centerline{\underline{Primal problem}}\vspace{-7mm}
    \begin{align*}
      \text{maximize:}\quad & \ip{A}{X}\\
      \text{subject to:}\quad & \Phi(X) = B,\\
      & X\in\pos{\X}.
    \end{align*}
  \end{minipage}
  \hspace*{13mm}
  \begin{minipage}{2.6in}
    \centerline{\underline{Dual problem}}\vspace{-7mm}
    \begin{align*}
      \text{minimize:}\quad & \ip{B}{Y}\\
      \text{subject to:}\quad & \Phi^{\ast}(Y) \geq A,\\
      & Y\in\herm{\Y}.
    \end{align*}
  \end{minipage}
\end{center}
\noindent
The optimal primal value of this semidefinite program is
\[
\alpha = \sup\{\ip{A}{X}\,:\,X\in\pos{\X},\,\Phi(X) = B\}
\]
and the optimal dual value is
\[
\beta = \inf\{\ip{B}{Y}\,:\,Y\in\herm{\Y},\,\Phi^{\ast}(Y) \geq A\}.
\]
(It is to be understood that the supremum over an empty set is
$-\infty$ and the infimum over an empty set is $\infty$, so $\alpha$
and $\beta$ are well-defined values in the set
$\mathbb{R}\cup\{-\infty,\infty\}$.
In this thesis, however, we will only consider semidefinite programs for
which $\alpha$ and $\beta$ are finite).

\end{definition}

One of the most useful facts about a semidefinite program is that it always holds that $\alpha \leq \beta$. This is known as \emph{weak duality}.
The stronger condition $\alpha = \beta$, which is known as 
\emph{strong duality}, does not hold for every semidefinite program.
However, it is known that there are simple conditions under which it does hold.
The following theorem provides us with an example of such conditions:

\begin{theorem}[Slater's theorem for semidefinite programs]
  \label{theorem:Slater}
Let $(\Phi,A,B)$ be the operators in our definition of a semidefinite program, and let $\alpha$ and
$\beta$ be the optimal primal and dual values for the program.
\begin{mylist}{\parindent}
\item[1.]
  If the dual problem is feasible and there exists a positive definite operator
  $X\in\pd{\X}$ for which $\Phi(X) = B$,
  then $\alpha = \beta$ and there exists an operator $Y\in\herm{\Y}$
  such that $\Phi^{\ast}(Y)\geq A$ and $\ip{B}{Y} = \beta$.
\item[2.]
  If the primal problem is feasible and there exists a Hermitian operator
  $Y\in\herm{\Y}$ for which $\Phi^{\ast}(Y) > A$,
  then $\alpha = \beta$ and there exists a positive semidefinite 
  operator $X\in\pos{\X}$ such that $\Phi(X)=B$ and 
  $\ip{A}{X} = \alpha$.
\end{mylist}
\end{theorem}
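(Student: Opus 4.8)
Weak duality already gives $\alpha\le\beta$, so in each part it suffices to exhibit a feasible solution of the opposite problem whose objective value attains this bound; that simultaneously yields $\alpha=\beta$ and the asserted attainment. The two parts are mirror images of one another: by introducing a positive semidefinite slack operator and rewriting the dual constraint $\Phi^{\ast}(Y)\ge A$ as an equality $\Phi^{\ast}(Y)-Z=A$ with $Z\in\pos{\X}$, the dual problem can itself be cast in the standard primal form, and its own dual is, up to routine sign and variable bookkeeping, the original primal problem. I would therefore prove Part~1 in full and then obtain Part~2 by applying the Part~1 argument to this re-expressed program, with the roles of ``strict feasibility'' and ``attained optimum'' interchanged.

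The tool is the hyperplane separation theorem, applied in the real inner-product space $\herm{\Y}\oplus\Real$. First I would form the convex cone
\[
\mathcal{A}=\Set{(\Phi(X),\,t):X\in\pos{\X},\ t\le\ip{A}{X}},
\]
which is convex and closed under nonnegative scaling because $\Phi$ and $\ip{A}{\cdot}$ are linear and $\pos{\X}$ is a convex cone. For any $\epsilon>0$ the point $(B,\alpha+\epsilon)$ cannot lie in $\mathcal{A}$: membership would supply some $X\in\pos{\X}$ with $\Phi(X)=B$ and $\ip{A}{X}\ge\alpha+\epsilon$, contradicting that $\alpha$ is the primal optimum. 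Separating this point from $\mathcal{A}$ produces a nonzero functional, described by a pair $(Y,s)\in\herm{\Y}\oplus\Real$, satisfying $\ip{Y}{\Phi(X)}+s\,t\le 0$ for every $(\Phi(X),t)\in\mathcal{A}$ together with $\ip{Y}{B}+s(\alpha+\epsilon)\ge 0$. Sending $t\to-\infty$ forces $s\ge 0$, and choosing $t=\ip{A}{X}$ gives $\ip{\Phi^{\ast}(Y)+sA}{X}\le 0$ for all $X\in\pos{\X}$, that is, $\Phi^{\ast}(-Y)\ge sA$.

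The heart of the argument, and the step I expect to be the main obstacle, is to rule out the degenerate ``vertical'' case $s=0$: only when $s>0$ may I rescale $\widetilde{Y}=-Y/s$ to obtain a genuine dual-feasible operator with $\Phi^{\ast}(\widetilde{Y})\ge A$ and $\ip{B}{\widetilde{Y}}\le\alpha+\epsilon$. This is precisely where both hypotheses of Part~1 do their work. If $s=0$ then $\Phi^{\ast}(Y)\le 0$ while $\ip{Y}{B}\ge 0$; evaluating against the positive definite $X_{0}$ guaranteed by strict feasibility gives $\ip{Y}{B}=\ip{\Phi^{\ast}(Y)}{X_{0}}\le 0$, so $\ip{Y}{B}=0$, and since $X_{0}>0$ in fact $\Phi^{\ast}(Y)=0$; the leftover freedom is then absorbed by using a dual-feasible witness to convert $(Y,0)$ into a non-vertical separator. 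It is the strictness $X_{0}>0$, rather than mere feasibility, that lets this survive the passage to the closure of $\mathcal{A}$ needed to invoke separation. Once $s>0$ is secured, running the same separation against the boundary point $(B,\alpha)$ (or extracting a limit as $\epsilon\downarrow 0$) yields a dual-feasible $\widetilde{Y}$ with $\ip{B}{\widetilde{Y}}\le\alpha$; combined with weak duality this forces $\alpha=\beta$ and identifies $\widetilde{Y}$ as an optimal dual solution, completing Part~1. Part~2 then follows from the symmetry reduction described in the first paragraph.
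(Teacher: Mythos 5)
A preliminary remark on the comparison you were asked for: the paper states this theorem as known background in Chapter 2 and gives no proof, citing the semidefinite programming literature, so your proposal must be judged on its own terms. Its skeleton is the standard separating-hyperplane proof, and most individual steps are sound: the cone $\mathcal{A}$ is convex, $(B,\alpha+\epsilon)\notin\mathcal{A}$ (note you need dual feasibility here, via weak duality, to know $\alpha\leq\beta<\infty$ so that this point is well defined --- that, and not the $s=0$ case, is where the dual-feasibility hypothesis does its work), $s\geq 0$ follows by sending $t\to-\infty$, and when $s=0$ the strictly feasible $X_0\in\pd{\X}$ correctly forces $\ip{Y}{B}=0$ and $\Phi^{\ast}(Y)=0$. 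The first genuine gap is the resolution of that $s=0$ case, which you yourself flag as the heart of the argument: ``absorbing the leftover freedom by using a dual-feasible witness to convert $(Y,0)$ into a non-vertical separator'' names no mechanism, and a dual-feasible $Y_0$ does not repair a vertical separator in any evident way. What actually closes the case is the \emph{form} of separation you are entitled to: since $(B,\alpha+\epsilon)$ may lie in the closure of $\mathcal{A}$, only \emph{proper} separation is available (which holds because the point avoids the relative interior), and properness is precisely what your computation then contradicts --- with $s=0$, $\Phi^{\ast}(Y)=0$ and $\ip{Y}{B}=0$ make the functional vanish identically on $\mathcal{A}\cup\set{(B,\alpha+\epsilon)}$, so no proper separator can be vertical. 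Without invoking that sharper statement, your argument as written does not rule out $s=0$.

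The second gap is dual attainment. Your fallback of separating against the boundary point $(B,\alpha)$ fails exactly when the primal optimum is attained, for then there is an optimal $X^*$ with $\Phi(X^*)=B$ and $\ip{A}{X^*}=\alpha$, so $(B,\alpha)\in\mathcal{A}$ and there is nothing to separate. The limit $\epsilon\downarrow 0$ does work, but not for free: you must show the family $\widetilde{Y}_\epsilon$ has a convergent subsequence. Strict feasibility earns its keep a second time here: writing $Z_\epsilon=\Phi^{\ast}(\widetilde{Y}_\epsilon)-A\geq 0$ and $X_0\geq\lambda\mathbb{I}$ with $\lambda>0$, one has $\alpha+\epsilon\geq\ip{B}{\widetilde{Y}_\epsilon}=\ip{A}{X_0}+\ip{Z_\epsilon}{X_0}\geq\ip{A}{X_0}+\lambda\tr(Z_\epsilon)$, so the slacks are bounded; after projecting $\widetilde{Y}_\epsilon$ onto the orthogonal complement of $\ker\Phi^{\ast}$ (which changes neither feasibility nor the objective, since $\ip{B}{W}=\ip{X_0}{\Phi^{\ast}(W)}=0$ for $W\in\ker\Phi^{\ast}$), boundedness of $\Phi^{\ast}(\widetilde{Y}_\epsilon)=A+Z_\epsilon$ gives boundedness of $\widetilde{Y}_\epsilon$, and any limit point is feasible with $\ip{B}{Y}\leq\alpha\leq\beta$, hence optimal. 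A cleaner standard alternative avoiding both issues is to separate the two disjoint convex sets $\mathcal{A}$ and the ray $\Set{(B,t):t>\alpha}$, which yields $\alpha=\beta$ and attainment in one stroke. Finally, your reduction of Part~2 to Part~1 by introducing a slack $Z\in\pos{\X}$ is standard and fine in outline, but you should verify that splitting the Hermitian variable $Y$ into a difference of positive semidefinite operators preserves the strict-feasibility hypothesis when the dual is recast in primal form.
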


This theorem states then in the first item that if there is a solution to the dual problem, as well as a positive definite solution to the primal problem, then strong duality holds, and an optimal dual solution is achievable. The second item gives us a similar condition, but reversing the role of the primal and dual problems. An $X$ such as the one in the first item and a $Y$ such as the one in the second item are called \emph{Slater points}.

\newpage

\chapter{Mathematical formalization} \label{ch:sdp}

We give now a presentation of the formalism that allows us to express the questions we ask in terms of semidefinite programs. This formalism was originally developed in \cite{GutoskiW07} and \cite{Gutoski10}. A related formalism for studying a similar kind of interaction was developed in \cite{ChiribellaDP09}.

The following definition formally defines an interaction of the kind that we described in the introduction. The interaction is assumed to have $r$ rounds (that is, $r$ questions from Alice to Bob) and $t$ different outcomes, which are indexed from $0$ to $t-1$:

\begin{definition} \label{def:game}

An interaction of the kind we study is defined by:

\begin{mylist}{\parindent}

\item[1. ] A series of $r$ quantum registers in which Alice writes her questions, which are then sent to Bob. The finite-dimensional complex Hilbert spaces  associated with these registers are denoted by $\X_1 \ldots \X_r$.

\item[2. ] A series of $r$ quantum registers registers in which Bob writes his answers, which are then sent to Alice. The finite-dimensional complex Hilbert spaces associated with these registers are denoted by $\Y_1 \ldots \Y_r$.

\item[3. ] A series of $r+1$ quantum registers registers that Alice uses to store her memory between the different points of the interaction.  The finite-dimensional complex Hilbert spaces associated with these registers are denoted by $\Z_1 \ldots \Z_{r+1}$.

\item [4. ] A quantum state that represents the first question sent by Alice to Bob, as well as the state of her initial memory. This state corresponds to a density matrix $\sigma \in \density{\X_1 \otimes \Z_1}$.

\item[5. ] A series of $r-1$ quantum channels  that correspond to the process by which Alice decides what question to ask. They produce a new question from Alice's memory and Bob's answer from the last question.  We denote them by $\Psi_2 \ldots \Psi_{r}$, with $\Psi_i $ sending elements of $\lin{\Y_{i-1} \otimes \Z_{i-1}}$ to elements of $\lin{\X_{i}}$.

\item[6. ] A quantum channel $\Psi_{r+1}$, sending elements of $\lin{\Y_{r} \otimes \Z_{r}}$ to elements of $\lin{\Z_{r+1}}$, the memory space for Alice after she receives the last answer.

\item[7. ] A projective measurement  $\{Q_i : 0 \leq i \leq k-1\}$ by which Alice decides the outcome of the interaction. This measurement is performed on $\Z_{r+1}$. The positive semidefinite operator corresponding to outcome $k$ is given by $Q_k$.

\end{mylist}

\end{definition}

Note that elements 6 and 7 could be merged together in a POVM measurement, but it will be more convenient for a later discussion to present the interaction in this way.

We formally define now the processes by which Bob can operate:

\begin{definition} \label{def:bob}

A possible process by which Bob can operate is given by: 
\begin{mylist}{\parindent}

\item[1. ]  A series of $r-1$ quantum registers registers that Bob uses to store his memory between sending an answer to Alice and receiving the next question. The finite-dimensional complex Hilbert spaces associated with these registers are denoted by $\W_1 \ldots \W_{r-1}$.

\item[2. ]  A series of $r$ quantum channels that correspond to the process by which Bob decides his answers. They produce an answer from Alice's question and Bob's memory. We denote them by $\Gamma_1 \ldots \Gamma_r$. If $r=1$, $\Gamma_1$ sends elements of $\lin{\X_1}$ to elements of $\lin{\Y_1}$. If $r>1$,  $\Gamma_1$ sends elements of $\lin{\X_1}$ to elements of $\lin{\Y_1 \otimes \W_1}$, $\Gamma_i$ for $1 < i < r$ sends elements of $\lin{\X_i \otimes \W_{i-1}}$ to elements of $\lin{\Y_r \otimes \W_r}$, and $\Gamma_r$ sends elements of  $\lin{\X_r \otimes \W_{r-1}}$ to elements of $\lin{\Y_r}$.

\end{mylist}

\end{definition}

A graphical representation of these definitions can be seen in Figure \ref{fig:interaction} 

\begin{figure}[t]
  \begin{center}
  	    \setlength{\unitlength}{0.000245in}
    \begin{picture}(23734,4539)(-600,200)
      \scriptsize
      \texture{8101010 10000000 444444 44000000 11101 11000000 444444 44000000
	101010 10000000 444444 44000000 10101 1000000 444444 44000000
	101010 10000000 444444 44000000 11101 11000000 444444 44000000
	101010 10000000 444444 44000000 10101 1000000 444444 44000000 }
	
      \shade\path(20412,1212)(21612,1212)(21612,12)(20412,12)(20412,1212)
      \shade\path(18612,4212)(19812,4212)(19812,3012)(18612,3012)(18612,4212)
      \shade\path(1812,4212)(3012,4212)(3012,3012)(1812,3012)(1812,4212)
      \shade\path(3612,1212)(4812,1212)(4812,12)(3612,12)(3612,1212)
      \shade\path(5412,4212)(6612,4212)(6612,3012)(5412,3012)(5412,4212)
      \shade\path(7212,1212)(8412,1212)(8412,12)(7212,12)(7212,1212)
      \shade\path(9012,4212)(10212,4212)(10212,3012)(9012,3012)(9012,4212)
      \shade\path(10812,1212)(12012,1212)(12012,12)(10812,12)(10812,1212)

      \shade\path(12,1212)(1212,1212)(1212,12)(12,12)(12,1212)

      % \put(-1012,3612){\makebox(0,0){$\ket{\psi_0}$}}
      \put(612,612){\makebox(0,0){$\rho_0$}}

      \put(4212,612){\makebox(0,0){$\Psi_2$}}
      \put(7812,612){\makebox(0,0){$\Psi_3$}}
      \put(11412,612){\makebox(0,0){$\Psi_4$}}
      \put(21012,612){\makebox(0,0){$\Psi_r$}}
      \put(2412,3612){\makebox(0,0){$\Gamma_1$}}
      \put(6012,3612){\makebox(0,0){$\Gamma_2$}}
      \put(9612,3612){\makebox(0,0){$\Gamma_3$}}
      \put(19212,3612){\makebox(0,0){$\Gamma_r$}}

%      \thicklines

      \path(16512,3612)(18312,3612)
      \blacken\path(18072.000,3552.000)(18312.000,3612.000)(18072.000,3672.000)
      (18072.000,3552.000)

      \path(18312,612)(20112,612)
      \blacken\path(19872.000,552.000)(20112.000,612.000)(19872.000,672.000)
      (19872.000,552.000)

      \path(17712,1512)(18912,2712)
      \blacken\path(18784.721,2499.868)(18912.000,2712.000)(18699.868,2584.721)
      (18784.721,2499.868)

      \path(19512,2712)(20712,1512)
      \blacken\path(20499.868,1639.279)(20712.000,1512.000)(20584.721,1724.132)
      (20499.868,1639.279)

      \path(912,1512)(2112,2712)
      \blacken\path(1984.721,2499.868)(2112.000,2712.000)(1899.868,2584.721)
      (1984.721,2499.868)

      \path(2712,2712)(3912,1512)
      \blacken\path(3699.868,1639.279)(3912.000,1512.000)(3784.721,1724.132)
      (3699.868,1639.279)

      \path(4512,1512)(5712,2712)
      \blacken\path(5584.721,2499.868)(5712.000,2712.000)(5499.868,2584.721)
      (5584.721,2499.868)

      \path(6312,2712)(7512,1512)
      \blacken\path(7299.868,1639.279)(7512.000,1512.000)(7384.721,1724.132)
      (7299.868,1639.279)

      \path(8112,1512)(9312,2712)
      \blacken\path(9184.721,2499.868)(9312.000,2712.000)(9099.868,2584.721)
      (9184.721,2499.868)

      \path(9912,2712)(11112,1512)
      \blacken\path(10899.868,1639.279)(11112.000,1512.000)(10984.721,1724.132)
      (10899.868,1639.279)

%      \path(-312,3612)(1512,3612)
%      \blacken\path(1272.000,3552.000)(1512.000,3612.000)(1272.000,3672.000)
%      (1272.000,3552.000)

      \path(3312,3612)(5112,3612)
      \blacken\path(4872.000,3552.000)(5112.000,3612.000)(4872.000,3672.000)
      (4872.000,3552.000)

      \path(6912,3612)(8712,3612)
      \blacken\path(8472.000,3552.000)(8712.000,3612.000)(8472.000,3672.000)
      (8472.000,3552.000)

      \path(1512,612)(3312,612)
      \blacken\path(3072.000,552.000)(3312.000,612.000)(3072.000,672.000)
      (3072.000,552.000)

      \path(5112,612)(6912,612)
      \blacken\path(6672.000,552.000)(6912.000,612.000)(6672.000,672.000)
      (6672.000,552.000)

      \path(8712,612)(10512,612)
      \blacken\path(10272.000,552.000)(10512.000,612.000)(10272.000,672.000)
      (10272.000,552.000)

      \path(10512,3612)(12312,3612)
      \blacken\path(12072.000,3552.000)(12312.000,3612.000)(12072.000,3672.000)
      (12072.000,3552.000)

      \path(12312,612)(14112,612)
      \blacken\path(13872.000,552.000)(14112.000,612.000)(13872.000,672.000)
      (13872.000,552.000)

      \path(11712,1512)(12912,2712)
      \blacken\path(12784.721,2499.868)(12912.000,2712.000)(12699.868,2584.721)
      (12784.721,2499.868)

      \path(20112,3612)(21912,3612)
      \blacken\path(21672.000,3552.000)(21912.000,3612.000)(21672.000,3672.000)
      (21672.000,3552.000)

      \path(21912,612)(23712,612)
      \blacken\path(23472.000,552.000)(23712.000,612.000)(23472.000,672.000)
      (23472.000,552.000)

      \put(16812,612){\blacken\ellipse{100}{100}}
      \put(16812,612){\ellipse{100}{100}}
      \put(15012,3612){\blacken\ellipse{100}{100}}
      \put(15012,3612){\ellipse{100}{100}}
      \put(13212,3612){\blacken\ellipse{100}{100}}
      \put(13212,3612){\ellipse{100}{100}}
      \put(14112,3612){\blacken\ellipse{100}{100}}
      \put(14112,3612){\ellipse{100}{100}}
      \put(15912,3612){\blacken\ellipse{100}{100}}
      \put(15912,3612){\ellipse{100}{100}}
      \put(15012,612){\blacken\ellipse{100}{100}}
      \put(15012,612){\ellipse{100}{100}}
      \put(15912,612){\blacken\ellipse{100}{100}}
      \put(15912,612){\ellipse{100}{100}}
      \put(17712,612){\blacken\ellipse{100}{100}}
      \put(17712,612){\ellipse{100}{100}}

      \put(1112,2112){\makebox(0,0)[b]{$\X_1$}}
      \put(4712,2112){\makebox(0,0)[b]{$\X_2$}}
      \put(8312,2112){\makebox(0,0)[b]{$\X_3$}}
      \put(11962,2112){\makebox(0,0)[b]{$\X_4$}}
      \put(17912,2112){\makebox(0,0)[b]{$\X_r$}}

      \put(3662,2112){\makebox(0,0)[b]{$\Y_1$}}
      \put(7262,2112){\makebox(0,0)[b]{$\Y_2$}}
      \put(10862,2112){\makebox(0,0)[b]{$\Y_3$}}
      \put(20612,2112){\makebox(0,0)[b]{$\Y_r$}}

 %     \put(612,4012){\makebox(0,0){$\cZ_0$}}
      \put(4212,4012){\makebox(0,0){$\W_1$}}
      \put(7862,4012){\makebox(0,0){$\W_2$}}
      \put(11462,4012){\makebox(0,0){$\W_3$}}
      \put(17462,4012){\makebox(0,0){$\W_{r-1}$}}

      \put(2412,212){\makebox(0,0){$\Z_1$}}
      \put(6012,212){\makebox(0,0){$\Z_2$}}
      \put(9612,212){\makebox(0,0){$\Z_3$}}
      \put(13212,212){\makebox(0,0){$\Z_4$}}
      \put(19212,212){\makebox(0,0){$\Z_{r}$}}
      \put(21012,4012){\makebox(0,0){$\W_r$}}
      \put(22662,212){\makebox(0,0){$\Z_{r+1}$}}

    \end{picture}
  \end{center}
  \caption{The interactions between Alice and Bob that we study}
  \label{fig:interaction}
\end{figure}
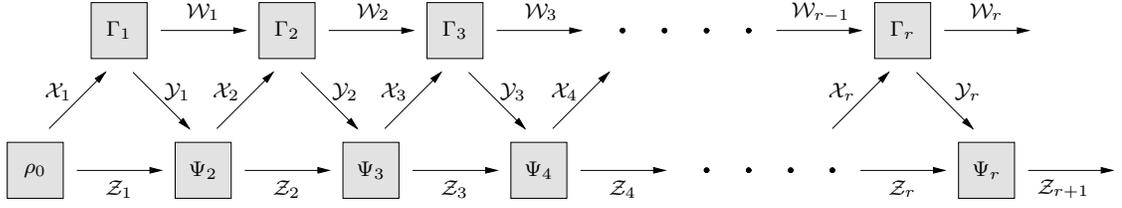

The main result that allows us to express our questions in terms of semidefinite programs, which originally appeared in \cite{Gutoski10}, is the following one:

\begin{lemma} \label{lm:rep}
There exists operators $P_0,  \ldots, P_{t-1} \in \pos{ \Y_{1 \ldots r} \otimes \X_{1 \ldots r} }$ , and a  map from the set of  possible processes by which Bob can operate to $\pos{\Y_{1 \ldots r} \otimes \X_{1 \ldots r}}$, such that the probability that outcome $i$ is obtained by the behaviour of Bob corresponding to $X \in \pos{ \Y_{1 \ldots r} \otimes \X_{1 \ldots r} }$ is $\ip{P_i}{X}$. 

Moreover,

\begin{mylist}{\parindent}

\item[1. ]

The subset of $\pos{\Y_{1 \ldots r} \otimes \X_{1 \ldots r}}$ to which the possible strategies for Bob are sent is composed by the elements $X$ such that there exist $X_1, \ldots, X_{r-1}$, with $\X_i \in \pos{\Y_{1 \ldots i} \otimes \X_{1 \ldots i}}$,  satisfying:

\begin{center}
  \begin{minipage}{2.6in}
    \begin{align*}
 \quad&        \tr_{\Y_1}(X_1) = \mathbb{I}_{\X_1}\\
  \quad&   \tr_{\Y_2}(X_2) -X_1 \otimes \mathbb{I}_{\X_2} = 0\\
   \quad & \vdots \\
 \quad &     \tr_{\Y_r}(X) -X_{r-1} \otimes \mathbb{I}_{\X_r}  =  0\\    
 \end{align*}
  \end{minipage}
\end{center}

\item[2. ] The operators $P_0,  \ldots, P_{t-1}$ are such that there exist $\{P_i^j \in \pos{\Y_{1 \ldots j}  \otimes \X_{1 \ldots j} }\ \ : \ 0 \leq i \leq t-1, 1 \leq j \leq r-1\}$, $\rho \in \density{\X_1}$, and $\{R_j \in \pos{\Y_{1 \ldots {j-1}} \otimes \X_{1 \ldots j}} : 2 \leq j \leq r \}$, satisfying:

\begin{center}
  \begin{minipage}{2.6in}
    \begin{align*}
 \quad &  \sum_{0 \leq i \leq t-1} P_i^1 = \mathbb{I}_{\Y_1} \otimes \rho \\
 \quad &  \tr_{\X_2}(R_2) =    \sum_{0 \leq i \leq t-1} P_i^1 \\
 \quad &  \sum_{0 \leq i \leq t-1} P_i^2 = \mathbb{I}_{\Y_2} \otimes R_2 \\
  \quad &      \vdots \\
 \quad &  \tr_{\X_r}(R_r) =    \sum_{0 \leq i \leq t-1} P_i^{r-1} \\
 \quad &  \sum_{0 \leq i \leq t-1} P_i = \mathbb{I}_{\Y_r} \otimes R_r \\
 \end{align*}
  \end{minipage}
\end{center}

\end{mylist}

\end{lemma}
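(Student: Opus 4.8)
The plan is to realise both the family $P_0,\ldots,P_{t-1}$ and the representing map for Bob's processes out of Choi--Jamio\l{}kowski operators, leaning on the three facts that $J$ is a linear bijection, that it sends a tensor product of maps to the tensor product of their operators, and that it turns trace preservation into the partial-trace identity $\tr_{\Y}(J(\Phi)) = \mathbb{I}_{\X}$. The whole interaction splits into an ``Alice side'' (the initial state $\sigma$, the channels $\Psi_2,\ldots,\Psi_{r+1}$, and the measurement $\{Q_i\}$) and a ``Bob side'' (the channels $\Gamma_1,\ldots,\Gamma_r$); once each side is collapsed into a single operator on $\Y_{1\ldots r}\otimes\X_{1\ldots r}$, the probability of outcome $i$ should reappear as the single inner product $\ip{P_i}{X}$. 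I would accordingly prove the lemma in three stages: construct the map for Bob and characterise its image (item~1), construct the operators $P_i$ and characterise them (item~2), and finally verify the inner-product formula.

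For item~1, I would define the map by composing Bob's channels through a product that contracts the shared memory registers (a link product). Concretely, I would accumulate operators $\tilde X_i \in \pos{\Y_{1\ldots i}\otimes\W_i\otimes\X_{1\ldots i}}$ by gluing $J(\Gamma_i)$ onto $\tilde X_{i-1}$ and contracting the common register $\W_{i-1}$, starting from $\tilde X_1 = J(\Gamma_1)$ and ending with $X = \tilde X_r$ (as $\Gamma_r$ emits no memory). The auxiliary operators of item~1 are then the memory-discarded reductions $X_i = \tr_{\W_i}(\tilde X_i)$, and the advertised recursion $\tr_{\Y_i}(X_i) = X_{i-1}\otimes\mathbb{I}_{\X_i}$, with base case $\tr_{\Y_1}(X_1)=\mathbb{I}_{\X_1}$, falls directly out of the trace preservation of each $\Gamma_i$ (complete positivity makes every $\tilde X_i$, hence every $X_i$, positive semidefinite). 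The converse, completeness, direction is the substantive one: given any $X$ and witnesses $X_1,\ldots,X_{r-1}$ obeying the constraints, I would reconstruct channels $\Gamma_i$ that the map sends back to $X$. Each $\Gamma_i$ is read off as the map whose Choi operator is the ``conditional'' operator of $X_i$ relative to $X_{i-1}$; the product form $X_{i-1}\otimes\mathbb{I}_{\X_i}$ guarantees this quotient is well defined on the support of the relevant reduced operator and that the reconstructed map is completely positive and trace preserving, with a purifying register playing the role of $\W_i$.

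For item~2 I would treat Alice symmetrically. The initial state contributes its $\X_1$-marginal $\rho = \tr_{\Z_1}(\sigma)$ at the first round, its $\X_1$--$\Z_1$ correlations being fed forward through $\Psi_2$; each channel $\Psi_j$ contributes its Choi operator, and the measurement $\{Q_i\}$ is attached at the end. Composing these through the same memory-contracting product and grouping by outcome produces operators $P_0,\ldots,P_{t-1}$, where the partial operators $P_i^j$ record the contribution of the first $j$ rounds to outcome $i$ and the $R_j$ record the post-$\Psi_j$ consistency operators. The displayed equalities of item~2 are precisely the images under $J$ of the measurement normalisation $\sum_i Q_i = \mathbb{I}$ (producing the identity factors $\mathbb{I}_{\Y_j}$) and of the trace preservation of each $\Psi_j$ (producing the identities $\tr_{\X_j}(R_j) = \sum_i P_i^{j-1}$), with positivity of the $P_i^j$ and $R_j$ inherited from complete positivity of the $\Psi_j$ and positivity of the $Q_i$. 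The probability formula then follows by unfolding both products and repeatedly invoking the adjoint relation $\ip{Y}{\Phi(Z)} = \ip{\Phi^\ast(Y)}{Z}$: each contraction over a pair $(\Y_j,\X_j)$ reproduces exactly one round of the physical interaction, so the full contraction of $P_i$ against $X$ equals $\ip{Q_i}{\text{(final memory state)}}$, which is by definition the probability that Alice announces outcome $i$.

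The main obstacle is the completeness half of item~1 (and its analogue for item~2): showing that these purely linear positivity-and-partial-trace constraints carve out \emph{exactly} the physically realisable operators, with no spurious extras. Soundness and the inner-product identity are essentially bookkeeping with the Choi correspondence, but reconstructing honest channels $\Gamma_i$ from a constraint-satisfying $X$ needs care -- one must define each quotient map off the support of $X_{i-1}$ in an arbitrary but valid way, choose the memory spaces $\W_i$ large enough to carry the required correlations, and check that the $\Gamma_i$ compose back to the given $X$ rather than merely to something sharing its reduced operators. This is exactly where the recursive product structure $X_{i-1}\otimes\mathbb{I}_{\X_i}$ is indispensable, since it is the signature left by a trace-preserving map acting on the single new question register $\X_i$.
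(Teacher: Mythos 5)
Your proposal follows the same route as the paper, which does not prove Lemma \ref{lm:rep} itself but sketches exactly this construction---collapsing Alice's actions into a single map $\Xi_A$ and Bob's into $\Xi_B$, setting $X = J(\Xi_B)$ and $P_k = J\left(\left(\tr_{\Z_{r+1}}(Q_k\,\Xi_A)\right)^*\right)$, and citing \cite{GutoskiW07,Gutoski10} for the fact that the stated positivity and partial-trace constraints characterize exactly these Choi--Jamio{\l}kowski operators. Your outline is sound, and you correctly isolate the completeness half of item~1 (reconstructing channels $\Gamma_i$, with suitably chosen memory spaces $\W_i$, from a constraint-satisfying $X$) as the substantive step---that is precisely the content of the Gutoski--Watrous characterization theorem to which the paper defers.
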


This will allow us to express the questions we examine in the next chapters as questions about semidefinite programs.  We will not give a proof of this Lemma here. However, we will mention the main ideas behind it. This will motivate how does the formalism apply to the case in which an interaction is repeated several times in parallel, with Alice acting independently. The main idea to prove this Lemma consists of looking at all the actions of Alice together as an operator $\Xi_A$ from $\lin{ \Y_1 \otimes \ldots \otimes \Y_{r} }$ to $\lin { \X_1 \otimes \ldots \otimes \X_r \otimes \Z_{r+1}}$, followed by a projective measurement $\{Q_k\}$ of   $\Z_{r+1}$, and tracing out  $\Z_{r+1}$. Similarly, we look at the actions of Bob as an operator $\Xi_B$ from $\lin{ \X_1 \otimes \ldots \X_{r} }$ to $\lin { \Y_1 \otimes \ldots \otimes \Y_r}$, as can be seen in figure \ref{fig:strategy}.Then,  $J(\Xi_B)$ is the object to which the actions of Bob map in the previous lemma, and $P_k$ is given by $J\left( \left( \tr_{\Z_{r+1}} \left(Q_k \Xi_A \right) \right)^* \right)$. The restrictions on such operators that appear in Lemma \ref{lm:rep} are then the conditions that $J\left( \left( \tr_{\Z_{r+1}} \left(Q_k \Xi_A \right) \right)^* \right)$and $J(\Xi_B)$ satisfy for  operations $\Xi_A$ and $\Xi_B$ following the causal structure in our description of an interaction, as it is proved in \cite{GutoskiW07} and \cite{Gutoski10}, with the operators $X_i$ and $R_j$ corresponding to the first rounds of the interaction. 

\begin{figure}[t]
  \begin{center}
  	  \setlength{\unitlength}{0.000445in}
  \begin{picture}(6999,2697)(0,-10)
    \scriptsize

    \texture{8101010 10000000 444444 44000000 11101 11000000 444444 44000000
	101010 10000000 444444 44000000 10101 1000000 444444 44000000
	101010 10000000 444444 44000000 11101 11000000 444444 44000000
	101010 10000000 444444 44000000 10101 1000000 444444 44000000 }

    \shade\path(1587,1137)(2262,1137)(2262,12)(1587,12)(1587,1137)
    \shade\path(3162,1812)(3837,1812)(3837,687)(3162,687)(3162,1812)
    \shade\path(4737,2487)(5412,2487)(5412,1362)(4737,1362)(4737,2487)

    \put(1925,575){\makebox(0,0){$\Gamma_1$}}
    \put(3500,1250){\makebox(0,0){$\Gamma_2$}}
    \put(5075,1925){\makebox(0,0){$\Gamma_3$}}

    %\path(6312,2487)(6987,2487)(6987,2037)(6312,2037)(6312,2487)
    %\path(12,462)(687,462)(687,12)(12,12)(12,462)

%    \put(350,237){\makebox(0,0){$\ket{\psi_0}$}}
    \put(687,1587){\makebox(0,0)[r]
      {$\xi\left\{\rule[9mm]{0mm}{0mm}\right.$}}
    \put(6312,912){\makebox(0,0)[l]
      {$\left.\rule[9mm]{0mm}{0mm}\right\}\Xi(\xi)$}}

%    \thicklines

    %\path(2262,912)(3162,912)
    %\blacken\path(3042.000,882.000)(3162.000,912.000)
    %  (3042.000,942.000)(3042.000,882.000)
    \path(2362,912)(3062,912)
    \blacken\path(2942.000,882.000)(3062.000,912.000)
      (2942.000,942.000)(2942.000,882.000)

    %\path(687,237)(1587,237)
    %\blacken\path(1467.000,207.000)(1587.000,237.000)
    %  (1467.000,267.000)(1467.000,207.000)
 %   \path(787,237)(1487,237)
 %   \blacken\path(1367.000,207.000)(1487.000,237.000)
 %     (1367.000,267.000)(1367.000,207.000)

    %\path(687,912)(1587,912)
    %\blacken\path(1467.000,882.000)(1587.000,912.000)
    %  (1467.000,942.000)(1467.000,882.000)
    \path(787,912)(1487,912)
    \blacken\path(1367.000,882.000)(1487.000,912.000)
      (1367.000,942.000)(1367.000,882.000)

    %\path(3837,1587)(4737,1587)
    %\blacken\path(4617.000,1557.000)(4737.000,1587.000)
    %  (4617.000,1617.000)(4617.000,1557.000)
    \path(3937,1587)(4637,1587)
    \blacken\path(4517.000,1557.000)(4637.000,1587.000)
      (4517.000,1617.000)(4517.000,1557.000)

    %\path(5412,2262)(6312,2262)
    %\blacken\path(6192.000,2232.000)(6312.000,2262.000)
    %  (6192.000,2292.000)(6192.000,2232.000)
%    \path(5512,2262)(6212,2262)
%    \blacken\path(6092.000,2232.000)(6212.000,2262.000)
%      (6092.000,2292.000)(6092.000,2232.000)

    %\path(5412,1587)(6312,1587)
    %\blacken\path(6192.000,1557.000)(6312.000,1587.000)
    %  (6192.000,1617.000)(6192.000,1557.000)
    \path(5512,1587)(6212,1587)
    \blacken\path(6092.000,1557.000)(6212.000,1587.000)
      (6092.000,1617.000)(6092.000,1557.000)

    %\path(2262,237)(6312,237)
    %\blacken\path(6192.000,207.000)(6312.000,237.000)
    %  (6192.000,267.000)(6192.000,207.000)
    \path(2362,237)(6212,237)
    \blacken\path(6092.000,207.000)(6212.000,237.000)
      (6092.000,267.000)(6092.000,207.000)

    %\path(687,1587)(3162,1587)
    %\blacken\path(3042.000,1557.000)(3162.000,1587.000)
    %  (3042.000,1617.000)(3042.000,1557.000)
    \path(787,1587)(3062,1587)
    \blacken\path(2942.000,1557.000)(3062.000,1587.000)
      (2942.000,1617.000)(2942.000,1557.000)

    %\path(3837,912)(6312,912)
    %\blacken\path(6192.000,882.000)(6312.000,912.000)
    %  (6192.000,942.000)(6192.000,882.000)
    \path(3937,912)(6212,912)
    \blacken\path(6092.000,882.000)(6212.000,912.000)
      (6092.000,942.000)(6092.000,882.000)

    %\path(687,2262)(4737,2262)
    %\blacken\path(4617.000,2232.000)(4737.000,2262.000)
    %  (4617.000,2292.000)(4617.000,2232.000)
    \path(787,2262)(4637,2262)
    \blacken\path(4517.000,2232.000)(4637.000,2262.000)
      (4517.000,2292.000)(4517.000,2232.000)

 %   \put(5862,2487){\makebox(0,0){$Z_3$}}
    \put(5862,1812){\makebox(0,0){$Y_3$}}
    \put(5862,1137){\makebox(0,0){$Y_2$}}
    \put(5862,462){\makebox(0,0){$Y_1$}}
    \put(1137,2487){\makebox(0,0){$X_3$}}
    \put(1137,1812){\makebox(0,0){$X_2$}}
    \put(1137,1137){\makebox(0,0){$X_1$}}
%    \put(1137,462){\makebox(0,0){$\cZ_0$}}
    \put(4287,1812){\makebox(0,0){$Z_2$}}
    \put(2712,1137){\makebox(0,0){$Z_1$}}
  \end{picture}
  \end{center}
  \caption{The map from $\lin{ \X_1 \otimes \ldots \X_{r} }$ to $\lin { \Y_1 \otimes \ldots \otimes \Y_r} $ corresponding to the actions of Bob, in the particular case $r=3$}
  \label{fig:strategy}
\end{figure}

Now, when we consider parallel repetition with Alice acting independently, we have that the new operator for Alice is given by 

\begin{eqnarray*}
J\left( \left( \tr_{\Z_{r+1}^{\otimes n}} \left(Q_{i_1} Q_{i_2} \ldots Q_{i_n} \Xi_A^{\otimes n} \right) \right)^* \right)  & = &   J\left( \left( \tr_{\Z_{r+1}} \left(Q_{i_1} \Xi_A \right) \right)^* \otimes  \ldots \otimes   \left( \tr_{\Z_{r+1}} \left(Q_{i_n} \Xi_A \right) \right)^* \right) \\ 
&  =  & J\left( \left( \tr_{\Z_{r+1}} \left(Q_{i_1} \Xi_A \right) \right)^* \right) \otimes  \ldots \otimes J \left(  \left( \tr_{\Z_{r+1}} \left(Q_{i_n} \Xi_A \right) \right)^* \right)  \\
&  =  & P_{i_1} \otimes \ldots \otimes P_{i_n},
\end{eqnarray*}

\noindent using the properties of tensor products mentioned in Chapter \ref{ch:back}. This motivates then the following Lemma:

\begin{lemma}  \label{cr:par}
Consider the case in which an interaction of the kind we consider here is repeated in parallel $n$ times. Then, if we let $P_0,  \ldots, P_{t-1} $ be the operators in the previous lemma, now there is a map from the set of possible strategies for Bob to  $\pos{ \left(\Y_{1 \ldots r} \otimes \X_{1 \ldots r} \right)^{\otimes n}}$ such that the probability that outcomes $\{i_j : j \in 1 \ldots n\}$ are obtained in the $n$ different repetitions when the behaviour of Bob corresponds to $X$ is given by $\ip{ P_{i_1} \otimes P_{i_2} \otimes \ldots \otimes P_{i_n} }{X}$. Moreover, the subset of $\pos{\Y_{1 \ldots r} \otimes \X_{1 \ldots r}}$ to which the possible strategies for Bob are sent is composed by the elements $X$ such that there exist $X_1, \ldots, X_{r-1}$, with $X_i \in \pos{\Y_{1 \ldots i} ^{\otimes n} \otimes \X_{1 \ldots i}^{\otimes n} }$,  satisfying:

\begin{center}
  \begin{minipage}{2.6in}
    \begin{align*}
  \quad &   \tr_{\Y_1^{\otimes n}}(X_1)  =  \mathbb{I}_{\X_1^{\otimes n}}\\
  \quad &    \tr_{\Y_2^{\otimes n}}(X_2) -X_1 \otimes \mathbb{I}_{\X_2^{\otimes n}} = 0\\
   \quad &   \vdots \\
   \quad &   \tr_{\Y_r^{\otimes n}}(X) -X_{r-1} \otimes \mathbb{I}_{\X_r^{\otimes n}} = 0\\    
 \end{align*}
  \end{minipage}
\end{center}

\end{lemma}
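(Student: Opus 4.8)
The plan is to observe that the $n$-fold parallel repetition, viewed as a whole, is itself an interaction of exactly the kind covered by Definition \ref{def:game}, only with enlarged registers, and then to apply Lemma \ref{lm:rep} to this single combined interaction. Concretely, I would package the $n$ independent copies of Alice into one interaction whose $i$-th question register has associated space $\X_i^{\otimes n}$, whose $i$-th answer register has space $\Y_i^{\otimes n}$, and whose memory registers have spaces $\Z_i^{\otimes n}$. The initial state is $\sigma^{\otimes n}$, the question-generating channels are $\Psi_i^{\otimes n}$, the final channel is $\Psi_{r+1}^{\otimes n}$, and, since the $t^n$ outcomes of the repeated interaction are the tuples $(i_1,\ldots,i_n) \in \{0,\ldots,t-1\}^n$, the measurement is the product measurement $\{Q_{i_1} \otimes \cdots \otimes Q_{i_n}\}$ performed on $\Z_{r+1}^{\otimes n}$. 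One checks immediately that this data satisfies the requirements of Definition \ref{def:game}: tensor powers of a density operator, of channels, and of a projective measurement are again a density operator, channels, and a projective measurement (the last summing to $\mathbb{I}_{\Z_{r+1}^{\otimes n}}$, since $\sum_{(i_1,\ldots,i_n)} Q_{i_1} \otimes \cdots \otimes Q_{i_n} = (\sum_{i_1} Q_{i_1}) \otimes \cdots \otimes (\sum_{i_n} Q_{i_n})$).

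With this combined interaction in hand, both conclusions follow by invoking Lemma \ref{lm:rep} for it, with Bob's map being $\Xi_B \mapsto J(\Xi_B)$ for his joint strategy. For the strategy constraints, item 1 of that lemma, applied with the enlarged spaces $\X_i^{\otimes n}$ and $\Y_i^{\otimes n}$, produces exactly the tensor-power trace conditions displayed in the statement, with the auxiliary operators $X_i$ living in $\pos{\Y_{1\ldots i}^{\otimes n} \otimes \X_{1\ldots i}^{\otimes n}}$; crucially, the set of $X$ obtained this way is the set of \emph{all} of Bob's strategies against the combined interaction, and hence includes strategies that correlate the $n$ repetitions, not merely product strategies. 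For the probability formula, Lemma \ref{lm:rep} supplies outcome operators $\tilde{P}_{(i_1,\ldots,i_n)}$ with $\ip{\tilde{P}_{(i_1,\ldots,i_n)}}{X}$ equal to the probability of the joint outcome $(i_1,\ldots,i_n)$, and by the recipe $P_k = J\left(\left(\tr_{\Z_{r+1}}(Q_k \Xi_A)\right)^*\right)$ these equal $\tilde{P}_{(i_1,\ldots,i_n)} = J\left(\left(\tr_{\Z_{r+1}^{\otimes n}}\left((Q_{i_1} \otimes \cdots \otimes Q_{i_n})\,\Xi_A^{\otimes n}\right)\right)^*\right)$. The computation displayed immediately before the statement factorizes this as $P_{i_1} \otimes \cdots \otimes P_{i_n}$, using the multiplicativity of $J$ and of the partial trace over tensor products recorded in Chapter \ref{ch:back}.

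The one point requiring care, and the only place the argument does real work beyond citing the single-shot lemma, is the reordering of tensor factors. The parallel repetition naturally lives on $\left(\Y_{1\ldots r} \otimes \X_{1\ldots r}\right)^{\otimes n}$, grouped by repetition, whereas the statement and Definition \ref{def:game} group by round, using spaces such as $\X_{1\ldots i}^{\otimes n}$. I would handle this exactly as announced in Chapter \ref{ch:back}, by silently conjugating with the unitary that permutes the tensor factors into round-grouped order. One must then verify that Alice's channel $\Psi_i^{\otimes n}$, after this permutation, indeed maps $\lin{\Y_{i-1}^{\otimes n} \otimes \Z_{i-1}^{\otimes n}}$ into $\lin{\X_i^{\otimes n}}$, so that the combined interaction genuinely fits Definition \ref{def:game} in the required coordinates, and that the trace conditions of item 1 of Lemma \ref{lm:rep} land on the factors claimed. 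This bookkeeping is routine, but it is where the abuse of notation for tensor reordering is being cashed in.
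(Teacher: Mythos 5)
Your proposal is correct and takes essentially the same approach as the paper: the paper only sketches this lemma, giving precisely the factorization $J\left( \left( \tr_{\Z_{r+1}^{\otimes n}} \left(Q_{i_1} \otimes \cdots \otimes Q_{i_n}\, \Xi_A^{\otimes n} \right) \right)^* \right) = P_{i_1} \otimes \cdots \otimes P_{i_n}$ via multiplicativity of $J$ and of the partial trace, and implicitly invoking Lemma~\ref{lm:rep} for the $n$-fold repetition viewed as a single interaction on the tensor-power spaces. Your write-up simply makes explicit the two points the paper leaves tacit, namely checking that the combined data fits Definition~\ref{def:game} and cashing in the tensor-reordering convention from Chapter~\ref{ch:back}, so it is a faithful formalization of the paper's own argument.
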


To model the situation when only classical information is being exchanged inside this more general model, it is enough to assume that Alice appends a completely dephasing channel before and after each of her actions, and her measurement corresponds as well to a set of diagonal matrices. The completely dephasing channel just makes equal to zero all non-diagonal elements from the density matrix representing its input state, and leaves the diagonal elements unmodified, making sure then that only classical information is being sent. In this case, it is not hard to see that $J\left( \left( \tr_{\Z_{r+1}} \left(Q_k \Xi_A \right) \right)^* \right)$ will be a diagonal matrix, since for the terms of the Choi-Jamio{\l}kowski  representation corresponding to a non-diagonal input we have that the channel  $\Xi_A$ will map them to 0, and for all the other ones, we have that they are mapped to a classical state, that is, a diagonal matrix. $\rho$ and the $R_i$ are diagonal as well, since they can be given a similar interpretation in terms of the Choi-Jamio{\l}kowski  representations for the actions of Alice when we ignore her measurement and shorten the length of the interaction, as it is shown in \cite{GutoskiW07} and \cite{Gutoski10}.

\newpage

\chapter{Optimal expected value under parallel repetition} \label{ch:avg}

We study in this chapter the situation where  each outcome $i \in \{0 \ldots t-1\}$ of an interaction is associated with a value $v_i$, and Bob is concerned with maximizing the expected value that he obtains from the interaction. We prove that when an interaction is instantiated several times in parallel, it is optimal for Bob to act independently when is he trying to maximize the expected value per instantiation.

To express this situation formally, consider a fixed interaction, described using our formalism in Chapter \ref{ch:sdp}, and a fixed process by which Bob can operate, represented by $X \in \pos{ \Y_{1 \ldots r} \otimes {\X_{1 \ldots r} } }$. As the probability that outcome $i$ is obtained is $\ip{P_i}{X}$, the expected value obtained by Bob for an interaction when following a process represented by $X$ is $\sum_i v_i \ip{P_i}{X}$.  Therefore, this is the objective function that Bob is trying to maximize in the situation studied in this chapter. We obtain then that the following optimization problem corresponds to the problem of finding the optimal strategy for Bob in the setting where he is trying to maximize the expected value of the outcome:

\begin{center}
  \begin{minipage}{2.6in}
    \centerline{\underline{Primal Problem 1}}\vspace{-7mm}
    \begin{align*}
      \text{maximize:} & \sum_i v_i \ip{P_i}{X}\\
      \text{subject to:}    \quad & \tr_{\Y_1}(X_1) = \mathbb{I}_{\X_1}\\
     \quad &  \tr_{\Y_2}(X_2) -X_1 \otimes \mathbb{I}_{\X_2} = 0\\
    \quad &  \vdots \\
      \quad &\tr_{\Y_r}(X) -X_{r-1} \otimes \mathbb{I}_{\X_r}  = 0\\    
	& X\in\pos{\Y_{1 \ldots r} \otimes \X_{1 \ldots r} }
	, X_i \in\pos{\Y_{1 \ldots i} \otimes \X_{1 \ldots i} }
    \end{align*}
  \end{minipage}
\end{center}

To compute the dual, we express this is in the exact same form that appears in our definition of semidefinite program:

\begin{center}
  \begin{minipage}{2.6in}
    \centerline{\underline{Primal Problem 2}}\vspace{-7mm}
    \begin{align*}
      \text{maximize:} & \ip{ \begin{pmatrix}
      0 & \ & \ & 0  \\
      \  & 0 & \ & \  \\
      \  & \ & \ddots & \  \\
     0 & \ & \  &  \sum_i v_i P_i   \\
    \end{pmatrix} }{\begin{pmatrix}
      X_1 & \ & \ & \  \\
      \  & X_2 & \ & \  \\
      \  & \ & \ddots & \  \\
      \  & \ & \  & X  \\
    \end{pmatrix} }\\
      \text{subject to:} & {\begin{pmatrix}
     \tr_{\Y_1}(X_1) & \ & \ & 0  \\
      \  &  \tr_{\Y_2}(X_2) -X_1 \otimes \mathbb{I}_{\X_2} & \ & \  \\
      \  & \ & \ddots & \  \\
      0  & \ & \  & \tr_{\Y_r}(X) -X_{r-1} \otimes \mathbb{I}_{\X_r}  \\
    \end{pmatrix} }  =   {\begin{pmatrix}
     \mathbb{I}_{\X_1} & \ & \ & 0 \\
      \  & 0 & \ & \  \\
      \  & \ & \ddots & \  \\
      0  & \ & \  & 0  \\
    \end{pmatrix} }\\
     \quad &   \begin{pmatrix}
      X_1 & \ & \ & \  \\
      \  & X_2 & \ & \  \\
      \  & \ & \ddots & \  \\
      \  & \ & \  & X  \\
    \end{pmatrix} \in \pos{ (\Y_1 \otimes \X_1) \oplus  ( \Y_{1 \ldots 2}  \otimes \X_{1 \dots 2} )  \ldots \oplus  (\Y_{1 \ldots r}  \otimes \X_{1 \dots r} )  }\\
    \end{align*}
  \end{minipage}
\end{center}

Note that we are ignoring the non-diagonal blocks of the solution. The reason why we can ignore them is that they are ignored by both the function that we are trying to optimize and the constraint $\Phi$ of the semidefinite program. Their presence does not alter either the set of $X, \{X_i\}$ that represent feasible solutions, since for any feasible solution to this new problem, its blocks along the diagonal will be positive semidefinite, and for any positive semidefinite $X, \{X_i\}$, leaving the non-diagonal blocks as zero will give us a positive semidefinite matrix.

As $\Phi$ is represented by the action

\[\begin{pmatrix}
      X_1 & \ & \ & \  \\
      \  & X_2 & \ & \  \\
      \  & \ & \ddots & \  \\
      \  & \ & \  & X  \\
    \end{pmatrix} \rightarrow \begin{pmatrix}
     \tr_{\Y_1}(X_1) & \ & \ & 0  \\
      \  &  \tr_{\Y_2}(X_2) -X_1 \otimes \mathbb{I}_{\X_2} & \ & \  \\
      \  & \ & \ddots & \  \\
      0  & \ & \  & \tr_{\Y_r}(X) -X_{r-1} \otimes \mathbb{I}_{\X_r}  \\
    \end{pmatrix},
    \]
    
   \noindent its adjoint $\Phi^*$ corresponds to the action
    
    \[ \begin{pmatrix}
      Y_1 & \ & \ & \  \\
      \  & Y_2 & \ & \  \\
      \  & \ & \ddots & \  \\
      \  & \ & \  & Y  \\
    \end{pmatrix}  \rightarrow   \begin{pmatrix}
      Y_1  \otimes \mathbb{I}_{\Y_1} - \tr_{\X_2}(Y_2)& \ & \ & \  \\
      \  &  Y_2  \otimes \mathbb{I}_{\Y_2} - \tr_{\X_3}(Y_3) & \ & \  \\
      \  & \ & \ddots & \  \\
      \  & \ & \  & Y \otimes \mathbb{I}_{\Y_r}  \\
    \end{pmatrix}, \] \noindent as can be verified by a simple computation.

The dual problem for our situation will correspond then to:

\begin{center}
  \begin{minipage}{2.6in}
    \centerline{\underline{Dual Problem 1}}\vspace{-7mm}
    \begin{align*}
      \text{minimize:} & \ip{ \begin{pmatrix}
     \mathbb{I}_{\X_1} & \ & \ & 0 \\
      \  & 0 & \ & \  \\
      \  & \ & \ddots & \  \\
      0  & \ & \  & 0  \\
    \end{pmatrix} }{ \begin{pmatrix}
      Y & \ & \ & \  \\
      \  & Y_2 & \ & \  \\
      \  & \ & \ddots & \  \\
      \  & \ & \  & Y_r  \\
    \end{pmatrix} } \\
      \text{subject to:} &  \begin{pmatrix}
      Y  \otimes \mathbb{I}_{\Y_1} - \tr_{\X_2}(Y_2)& \ & \ & \  \\
      \  &  Y_2  \otimes \mathbb{I}_{\Y_2} - \tr_{\X_3}(Y_3) & \ & \  \\
      \  & \ & \ddots & \  \\
      \  & \ & \  & Y_r \otimes \mathbb{I}_{\Y_r}  \\
    \end{pmatrix}  \geq   \begin{pmatrix}
      0 & \ & \ & 0  \\
      \  & 0 & \ & \  \\
      \  & \ & \ddots & \  \\
     0 & \ & \  &  \sum_i v_i P_i   \\     \end{pmatrix} \\     
      \quad &
    \begin{pmatrix}
      Y & \ & \ & \  \\
      \  & Y_2 & \ & \  \\
      \  & \ & \ddots & \  \\
      \  & \ & \  & Y_r  \\
    \end{pmatrix}  \in \herm{ \X_1 \oplus  ( \Y_1 \otimes X_{1 \dots 2} )  \ldots \oplus  ( \Y_{1 \ldots r-1}  \oplus \X_{1 \dots r} )  }
    \end{align*}
  \end{minipage}
\end{center}

We can simplify this and write it as:

\begin{center}
  \begin{minipage}{2.6in}
    \centerline{\underline{Dual Problem 2}}\vspace{-7mm}
    \begin{align*}
      \text{minimize:} & \tr(Y)  \\
      \text{subject to:}  \quad &  Y  \otimes \mathbb{I}_{\Y_1} - \tr_{\X_2}(Y_2) \geq 0 \\
      \quad &   Y_2  \otimes \mathbb{I}_{\Y_2} - \tr_{\X_3}(Y_3)  \geq 0 \\
      \vdots  \\
       \quad &   Y_r \otimes \mathbb{I}_{\Y_r} \geq  \sum_i v_i P_i \\ 
       	& Y \in\herm{\X_1 }
	, Y_i \in\herm{\Y_{1 \ldots i-1} \otimes \X_{1 \ldots i} }
    \end{align*}
  \end{minipage}
\end{center}

\noindent Note that even if we only  explicitly require $ Y \in\herm{\X_1, Y_i}, Y_i \in\herm{\Y_{1 \ldots i-1} \otimes \X_{1 \ldots i} }$, it must actually be the case that   $ Y \in\pos{\X_1 }, Y_i \in\pos{\Y_{1 \ldots i-1} \otimes \X_{1 \ldots i} }$. In the case of $Y_r$, this is because the fact that the last constraint is satisfied implies that $Y_r$ is $\geq$ than a positive semidefinite operator. Feasibility for the second to last constraint implies then that $Y_{r-1}$ is positive semidefinite, and so on. This will actually be the case for all the dual problems that consider in this thesis.
		
Note as well that there is a case $r=1$  that is somewhat of a notational corner case in our semidefinite programming formulations. In this situation, we have $X_1=X$ and $Y_r=Y$, with our semidefinite programs being then:

\begin{center}
  \begin{minipage}{2.6in}
    \centerline{\underline{Primal problem}}\vspace{-7mm}
    \begin{align*}
      \text{maximize:}\quad & \ip{P_1}{X}\\
      \text{subject to:}\quad & \tr_{\Y}(X) = \mathbb{I}_{\X_1},\\
      & X\in\pos{\Y_1\otimes\X_1}.
    \end{align*}
  \end{minipage}
  \hspace*{13mm}
  \begin{minipage}{2.6in}
    \centerline{\underline{Dual problem}}\vspace{-7mm}
    \begin{align*}
      \text{minimize:}\quad & \tr(Y)\\
      \text{subject to:}\quad & \mathbb{I}_{\Y_1} \otimes Y \geq P_1,\\
      & Y\in\herm{\X_1}.
    \end{align*}
  \end{minipage}
\end{center}

\noindent All of the proofs and derivations of semidefinite programs in the rest of this thesis can be adapted to this case without problems.

We prove now that both statements in Theorem \ref{theorem:Slater} apply, so there are optimal primal and dual solutions to our semidefinite program, and they have the same value. Indeed,

\begin{mylist} {\parindent}
\item[1. ] For the form of the primal problem that follows the definition of semidefinite program (Primal Problem 2), we have that there is a positive definition solution, which can be obtained by letting each of the diagonal blocks of our solution be an appropriate multiple of the identity. For example, we can have:

\[
X_1 = \frac{\mathbb{I}_{\Y_1 \otimes \X_1} }{\dim{Y_1}}, \ X_2 = \frac{\mathbb{I}_{\Y_{1\ldots2} \otimes \X_{1\ldots 2}} }{\dim{Y_1}*\dim{Y_2}},   \ldots, X = \frac{\mathbb{I}_{\Y_{1\ldots r} \otimes \X_{1\ldots r }} }{\dim{Y_1}*\dim{Y_2}*\ldots\dim{Y_r}}
\]

\item[2. ]  For the form of the dual problem that follows the definition of semidefinite program (Dual Problem 2), we have that there is a solution that strictly satisfies the constraint, which can be obtained again by letting each of the diagonal blocks of our solution be an appropriate multiple of the identity. For example, we can have:

\begin{align*}
 Y_r &  =  \left(\norm{\sum_i v_i P_i}+1 \right) \mathbb{I}_{\Y_{1 \ldots r-1} \otimes \X_{1 \ldots r}} \\
   Y_{r-1} & =  2\dim{\X_r} \left(\norm{\sum_i v_i P_i}+1\right)\mathbb{I}_{\Y_{1 \ldots r-2} \otimes \X_{1 \ldots r-1}} \\
   Y_{r-2} & =  4\dim{\X_{r-1}}\dim{\X_r}\left(\norm{\sum_i v_i P_i}+1\right)\mathbb{I}_{\Y_{1 \ldots r-3} \otimes \X_{1 \ldots r-2}}   \\
   \vdots \\
  Y & =  2^{r-1}\dim{\X_2} \ldots \dim{\X_{r-1}}\dim{\X_r}\left(\norm{\sum_i v_i P_i}+1\right)\mathbb{I}_{ \X_{1}}  
\end{align*}

\end{mylist}

\noindent We will then write $v$ to refer to the optimal value of these optimizations problems.

We consider now the situation in which $n$ copies of the same interaction occur, with Alice acting independently, while Bob is free to correlate his actions in the different repetitions. Bob is trying to maximize the expected value per repetition that he achieves, with the value of a series of outcomes being the sum of their individual values. Using the characterization of the possible processes by which Bob can operate in a parallel repetition situation from Lemma \ref{cr:par}, we obtain that this corresponds to the optimization problem:

\begin{center}
  \begin{minipage}{2.6in}
    \centerline{\underline{Primal Problem 3}}\vspace{-7mm}
    \begin{align*}
      \text{maximize:} & \sum_{i_1, i_2, \ldots, i_n} \frac{1}{n} \left( v_{i_1} + v_{i_2} + \ldots + v_{i_n} \right) \ip{P_{i_1} \otimes P_{i_2}  \otimes \ldots \otimes P_{i_n} }{X}\\
      \text{subject to:}    \quad & \tr_{\Y_1^{\otimes n}}(X_1) = \mathbb{I}_{\X_1^{\otimes n}}\\
     \quad &  \tr_{\Y_2^{\otimes n}}(X_2) -X_1 \otimes \mathbb{I}_{\X_2^{\otimes n}} = 0\\
    \quad &  \vdots \\
      \quad &\tr_{\Y_r^{\otimes n}}(X) -X_{r-1} \otimes \mathbb{I}_{\X_r^{\otimes n}}  = 0\\    
	& X\in\pos{ \Y_{1 \ldots r}^{\otimes n} \otimes \X_{1 \ldots r}^{\otimes n} }
	, X_i \in\pos{\Y_{1 \ldots i}^{\otimes n} \otimes \X_{1 \ldots i}^{\otimes n} }
    \end{align*}
  \end{minipage}
\end{center}

Going through the same process to obtain a simplified version of the dual as in the case with a single repetition, we have that the dual of this optimization problem is:

\begin{center}
  \begin{minipage}{2.6in}
    \centerline{\underline{Dual Problem 3}}\vspace{-7mm}
    \begin{align*}
      \text{minimize:} & \tr(Y)  \\
      \text{subject to:}  \quad &  Y  \otimes \mathbb{I}_{\Y_1^{\otimes n}} - \tr_{\X_2^{\otimes n}}(Y_2) \geq 0 \\
      \quad &   Y_2  \otimes \mathbb{I}_{\Y_2^{\otimes n}} - \tr_{\X_3^{\otimes n}}(Y_3)  \geq 0 \\
      \vdots  \\
       \quad &   Y_r \otimes \mathbb{I}_{\Y_r^{\otimes n}} \geq    \sum_{i_1, i_2, \ldots, i_n} \frac{1}{n} \left( v_{i_1} + v_{i_2} + \ldots + v_{i_n} \right) \left( P_{i_1} \otimes P_{i_2}  \otimes \ldots \otimes P_{i_n}   \right) \\ 
       	& Y \in\herm{\X_1^{\otimes n} }
	, Y_i \in\herm{\Y_{1 \ldots i-1}^{\otimes n} \otimes \X_{1 \ldots i}^{\otimes n} }
    \end{align*}
  \end{minipage}
\end{center}

We again have that strong duality holds, and there are optimal solutions for both the primal and dual problems. To see this, we can adapt the forms of the primal and dual problems in the same way that we did for the case with one single repetition, and obtain Slater points by letting the diagonal blocks of our solutions be multiples of the identity. We write $v'$ then to refer to the optimal value of these optimizations problems.

\noindent We can now formally phrase the question

\begin{mylist}{\parindent}
\item[\ ] Can Bob improve on his  expected value per interaction when $n$ interactions are played in parallel, as opposed to a single interaction?
\end{mylist}

\noindent as

\begin{mylist}{\parindent}
\item[\ ] Is $v = v'$?
\end{mylist}

\noindent We will see now that the answer to this question is affirmative. Informally, it is clear that $v' \geq v$, since if Bob just plays his optimal strategy for one repetition in an independent way, his expected value per repetition will be the optimum expected value when only one single repetition occurs. And indeed, let $X, \{X_i\}$ represent an optimal solution to the primal version of the optimization problem for a single repetition (Primal Problem 1). Then, I claim that $X^{\otimes n}, \{X_i^{\otimes n}\}$ represent a feasible solution to the primal optimization problem for more than one repetition (Primal Problem 3), with value $v$. Indeed, using the properties of the tensor product that we stated in Chapter \ref{ch:back}, we have:

\begin{mylist}{\parindent}
\item[$\bullet$]  If we consider a random variable $V$ that takes value $v_i$ with probability $\ip{P_i}{X} $, we have that  

\begin{eqnarray*}
 & & \sum_{i_1, i_2, \ldots, i_n} \frac{1}{n} \left( v_{i_1} + v_{i_2} + \ldots + v_{i_n} \right) \ip{P_{i_1} \otimes P_{i_2}  \otimes \ldots \otimes P_{i_n} }{X^{\otimes n}}  \\
& = & \frac{1}{n}  \sum_{i_1, i_2, \ldots, i_n}  \left( v_{i_1} + v_{i_2} + \ldots + v_{i_n} \right) \ip{P_{i_1}}{X} \ip{P_{i_2}}{X} \ldots \ip{P_{i_n}}{X}   \\
& =  &  \frac{1}{n} \text{E}[nV] = \frac{1}{n} n \text{E}[V] = \text{E}[V] = v
\end{eqnarray*}

\item[$\bullet$]  $\tr_{\Y_1^{\otimes n}}(X_1^{\otimes n})= \left( \tr_{\Y_1}(X_1) \right) ^{\otimes n} = \mathbb{I}_{\X_1}^{\otimes n} =  \mathbb{I}_{\X_1^{\otimes n}}$
\item[\ ]  $\tr_{\Y_2^{\otimes n}}(X_2^{\otimes n})  - X_1^{\otimes n} \otimes \mathbb{I}_{\X_2^{\otimes n}}=  \left( \tr_{\Y_1}(X_2) \right) ^{\otimes n} - \left( X_1 \otimes \mathbb{I}_{\X_2} \right) ^{\otimes n} $
\item[\ ] \hspace{4.8cm} =  $\left(  X_1 \otimes \mathbb{I}_{\X_2}  \right) ^{\otimes n} - \left( X_1 \otimes \mathbb{I}_{\X_2} \right) ^{\otimes n}  = 0$
\item[\ ] \hspace{0.5cm} \vdots
\item[\ ]  $\tr_{\Y_r^{\otimes n}}(X^{\otimes n})  - X_{r-1}^{\otimes n} \otimes \mathbb{I}_{\X_r^{\otimes n}}= \tr_{\Y_r}(X)^{\otimes n} - \left( X_{r-1} \otimes \mathbb{I}_{\X_r} \right) ^{\otimes n} $
\item[\ ] \hspace{4.9cm} =  $\left(  X_{r-1} \otimes \mathbb{I}_{\X_r}  \right) ^{\otimes n} - \left( X_{r-1} \otimes \mathbb{I}_{\X_r} \right) ^{\otimes n}  = 0$
\item[$\bullet$ ] As $X \geq 0$ and $X_i \geq 0$, $X^{\otimes n} \geq 0$ and $X_i ^{\otimes n} \geq 0$. 

\end{mylist}

It is harder to come up with an intuitive reason for why $v' \leq v$. A possible informal argument would be that if Bob can obtain a value better than $v$ as the expected value per repetition, then in one of his repetitions his expected value is better than $v$. Then, to obtain a expected value better than $v$ when only one repetition is considered, he could simulate a setting in which multiple repetitions are considered, and let the "real" repetition of those be the one in which he obtains a value better than $v$. To give a formal argument for the fact that $v' \leq v$, we simply derive a solution to the dual problem for multiple repetitions (Dual Problem 3) with value $v$. Let then $Y$, $\{Y_i\}$ represent an optimal solution to Dual Problem 1. Then, I claim that 

\[
 \frac{1}{n} \left( Y \otimes \rho \otimes \ldots \otimes \rho + \ldots + \rho \otimes \rho \otimes \ldots \otimes Y  \right), \left\{  \frac{1}{n} \left( Y_i \otimes R_i \otimes R_i \otimes R_i + \ldots +  R_i \otimes R_i \otimes \ldots \otimes Y_i  \right) \right\}
\]

 \noindent represents a feasible solution to the dual optimization problem for more than one repetition (Dual Problem 3) with value $v$.  Indeed, using again the properties of the tensor product that we stated in Chapter \ref{ch:back}, we have:

\begin{mylist}{\parindent}
\item[$\bullet$]  $\tr \left( \frac{1}{n} \left( Y \otimes \rho \otimes \ldots \otimes \rho + \ldots + \rho \otimes \rho \otimes \ldots \otimes Y  \right) \right) = \frac{1}{n} n \tr(Y) \tr(\rho)^{n-1} = \tr(Y)=v$
\item[$\bullet$]  For the first constraint of the dual problem, we have:
\begin{eqnarray*}
 & &  \hspace{-1cm}   \frac{1}{n} \left( Y \otimes \rho \otimes \ldots \otimes \rho + \ldots + \rho \otimes \rho \otimes \ldots \otimes Y  \right) \otimes \mathbb{I}_{\Y_1^{\otimes n}}  -  \\
& &   \hspace{-1cm} \tr_{\X_2^{\otimes n}}  \left( \frac{1}{n} \left( Y_2 \otimes R_2 \otimes \ldots \otimes R_2 + \ldots + R_2 \otimes R_2 \otimes \ldots \otimes Y_2  \right) \right) \\
& &   \hspace{-1cm} =  \frac{1}{n} \left( \left( Y \otimes \mathbb{I}_{\Y_1} \right)  \otimes  \left( \rho \otimes \mathbb{I}_{\Y_1} \right) \otimes \ldots \otimes   \left( \rho \otimes \mathbb{I}_{\Y_1} \right) - \tr_{\X_2}(Y_2) \otimes  \tr_{\X_2}(R_2)  \otimes \ldots \otimes  \tr_{\X_2}(R_2) \right) + \ldots  \\
& &  \hspace{-1cm} +  \frac{1}{n} \left( \left( \rho \otimes \mathbb{I}_{\Y_1} \right)  \otimes  \ldots \otimes \left( \rho \otimes \mathbb{I}_{\Y_1} \right) \otimes   \left( Y  \otimes \mathbb{I}_{\Y_1} \right) - \tr_{\X_2}(R_2)  \otimes \ldots \otimes  \tr_{\X_2}(R_2)  \otimes  \tr_{\X_2}(Y_2) \right)    \\
& &   \hspace{-1cm} \geq 0, \text{as }   \rho \otimes \mathbb{I}_{\Y_1} = \tr_{\X_2}(R_2) \text{ and }  Y \otimes \mathbb{I}_{\Y_1} \geq \tr_{\X_2}(Y_2)
\end{eqnarray*}

\item[$\bullet$ ]  For successive constraints, we have:
\begin{eqnarray*}
 & &   \hspace{-1cm}  \frac{1}{n} \left( Y_i \otimes R_i \otimes \ldots \otimes R_i+ \ldots + R_i \otimes R_i \otimes \ldots \otimes Y  \right) \otimes \mathbb{I}_{\Y_i^{\otimes n}}  -  \\
& &  \hspace{-1cm} \tr_{\X_{i+1}^{\otimes n}}  \left( \frac{1}{n} \left( Y_{i+1} \otimes R_{i+1} \otimes \ldots \otimes R_{i+1} + \ldots + R_{i+1} \otimes R_{i+1} \otimes \ldots \otimes Y_{i+1}  \right) \right) \\
&    \hspace{-2cm}= & \hspace{-1cm}   \frac{1}{n}  \left( Y_i \otimes \mathbb{I}_{\Y_i} \right)  \otimes  \left( R_i \otimes \mathbb{I}_{\Y_i} \right) \otimes \ldots \otimes   \left( R_i \otimes \mathbb{I}_{\Y_i} \right) \\ 
& & \hspace{-1cm} - \frac{1}{n} \tr_{\X_{i+1}}(Y_{i+1}) \otimes  \tr_{\X_{i+1}}(R_{i+1})  \otimes \ldots \otimes  \tr_{\X_{i+1}}(R_{i+1}) + \ldots  \\
& &  \hspace{-1cm}  +  \frac{1}{n}  \left( R_i \otimes \mathbb{I}_{\Y_i} \right)  \otimes  \ldots \otimes \left( R_i \otimes \mathbb{I}_{\Y_i} \right) \otimes   \left( Y_i  \otimes \mathbb{I}_{\Y_i} \right) \\
& & \hspace{-1cm} - \frac{1}{n} \tr_{\X_{i+1}}(R_{i+1})  \otimes \ldots \otimes  \tr_{\X_{i+1}}(R_{i+1})  \otimes  \tr_{\X_{i+1}}(Y_{i+1})     \\
&  \hspace{-2cm} \geq &  \hspace{-1cm} 0, \text{as }   R_i \otimes \mathbb{I}_{\Y_i} = \tr_{\X_{i+1}}(R_{i+1}) \text{ and }  Y_i \otimes \mathbb{I}_{\Y_i} \geq \tr_{\X_{i+1}}(Y_{i+1})
\end{eqnarray*}
\item[$\bullet$ ] For the last constraint,  As $ Y_r \otimes \mathbb{I}_{\Y_r}  \geq \sum_i v_i P_i$, and $R_r \otimes \mathbb{I}_{\Y_r} = \sum_i P^i $  we have that 
\[
 \left( R_r \otimes \mathbb{I}_{\Y_r} \right)^{\otimes k-1} \otimes  \left( Y_r \otimes \mathbb{I}_{\Y_r}  \right) \otimes  \left( R_r \otimes \mathbb{I}_{\Y_r} \right)^{\otimes n-k} \geq  \sum_{i_1, i_2, \ldots, i_n} v_{i_k}  \left( P_{i_1} \otimes P_{i_2}  \otimes \ldots \otimes P_{i_n}   \right)
 \]
 
Therefore,

\begin{eqnarray*}
& & \frac{1}{n} \sum_k \left( R_r \otimes \mathbb{I}_{\Y_r} \right)^{\otimes k-1} \otimes  \left( Y_r \otimes \mathbb{I}_{\Y_r}  \right) \otimes  \left( R_r \otimes \mathbb{I}_{\Y_r} \right)^{\otimes n-k} \\
& \geq & \sum_{i_1, i_2, \ldots, i_n} \frac{1}{n} \left( v_{i_1} + \ldots + v_{i_n} \right) \left( P_{i_1} \otimes P_{i_2}  \otimes \ldots \otimes P_{i_n}   \right)
 \end{eqnarray*}

\end{mylist}

Note that this implies that the answer to our question when only classical information is allowed is positive as well, as the classical case is a particular case of the quantum one.

\newpage

\chapter{Optimal strategies in risk-minimizing parallel repetition}\label{ch:nindep}

We consider here the situation in which the outcomes are split into two groups,  \emph{winning} outcomes and \emph{losing} outcomes, and Bob desires to obtain winning outcomes 

When Bob is trying to optimize the expected number of repetitions in which he obtains a winning outcome, the best he can do is to play independently several copies of his optimal strategy for achieving a winning outcome when only one repetition is considered. This can be seen from assigning value $1$ to the winning outcome and value $0$ to all other outcomes, and considering our result in the previous chapter. However, we can also consider the case in which Bob is not concerned with optimizing the number of repetitions in which he obtains the winning outcome, but rather with making sure that the number of repetitions in which he obtains the winning outcome is above a certain threshold. We can ask whether it is still optimal for Bob to play independently in this case.

To answer this question, note that we can assume without loss of generality that there are only two outcomes, by grouping together all the outcomes that correspond to a winning situation, and grouping also together all outcomes that correspond to a losing situation. To express this situation formally, consider a specific description of a game in the way presented in Chapter \ref{ch:sdp}, with $P_0$ and $P_1$ being the operators from Lemma \ref{lm:rep} that corresponds to the losing and winning outcome, respectively. We have than that determining the optimal process for Bob when he is trying to maximize the probability that he obtains the winning outcome, and only one repetition of the interaction is considered, corresponds to the following optimization problem:

\begin{center}
  \begin{minipage}{2.6in}
    \centerline{\underline{Primal Problem 4}}\vspace{-7mm}
    \begin{align*}
      \text{maximize:} & \ip{P_1}{X}\\
      \text{subject to:}    \quad & \tr_{\Y_1}(X_1) = \mathbb{I}_{\X_1}\\
     \quad &  \tr_{\Y_2}(X_2) -X_1 \otimes \mathbb{I}_{\X_2} = 0\\
    \quad &  \vdots \\
      \quad &\tr_{\Y_r}(X) -X_{r-1} \otimes \mathbb{I}_{\X_r}  = 0\\    
	& X\in\pos{\Y_{1 \ldots r} \otimes \X_{1 \ldots r} }
	, X_i \in\pos{\Y_{1 \ldots i} \otimes \X_{1 \ldots i} }
    \end{align*}
  \end{minipage}
\end{center}

The process of computing the dual is identical to the one one in Chapter \ref{ch:avg}, and we obtain as a result the dual problem:

\begin{center}
  \begin{minipage}{2.6in}
    \centerline{\underline{Dual Problem 4}}\vspace{-7mm}
    \begin{align*}
      \text{minimize:} & \tr(Y)  \\
      \text{subject to:}  \quad &  Y  \otimes \mathbb{I}_{\Y_1} - \tr_{\X_2}(Y_2) \geq 0 \\
      \quad &   Y_2  \otimes \mathbb{I}_{\Y_2} - \tr_{\X_3}(Y_3)  \geq 0 \\
      \vdots  \\
       \quad &   Y_r \otimes \mathbb{I}_{\Y_r} \geq  P_1 \\ 
       	& Y \in\herm{\X_1 }
	, Y_i \in\herm{\Y_{1 \ldots i-1} \otimes \X_{1 \ldots i} }
    \end{align*}
  \end{minipage}
\end{center}

We have again strong duality, with optimal solutions existing for both the primal and the dual problem. To see this, we can just notice that these programs are a particular case of the ones for the situation in Chapter \ref{ch:avg}. We will then denote by $p$ the optimal value of these semidefinite programs.

We consider now the situation in which several independent copies of the same interaction occur in parallel, and Bob is trying to optimize his probability of obtaining a winning outcome in at least $k$ of then. In our analysis of the situation, we will use $\Sigma^n_k$ to denote the subset of $\{0,1\}^n$ corresponding to the elements with exactly $k$ $1$s, and  $\Sigma^n_{\geq k}$ to denote the subset of $\{0,1\}^n$  corresponding to the elements with at least $k$ $1$s. From Lemma \ref{cr:par}, we obtain then that this situation corresponds to the following optimization problem:

\begin{center}
  \begin{minipage}{2.6in}
    \centerline{\underline{Primal Problem 5}}\vspace{-7mm}
    \begin{align*}
      \text{maximize:} & \sum_{ \left( i_1, i_2, \ldots, i_n  \right) \in \Sigma^n_{\geq k}} \ip{P_{i_1} \otimes P_{i_2}  \otimes \ldots \otimes P_{i_n} }{X}\\
      \text{subject to:}    \quad & \tr_{\Y_1^{\otimes n}}(X_1) = \mathbb{I}_{\X_1^{\otimes n}}\\
     \quad &  \tr_{\Y_2^{\otimes n}}(X_2) -X_1 \otimes \mathbb{I}_{\X_2^{\otimes n}} = 0\\
    \quad &  \vdots \\
      \quad &\tr_{\Y_r^{\otimes n}}(X) -X_{r-1} \otimes \mathbb{I}_{\X_r^{\otimes n}}  = 0\\    
	& X\in\pos{ \Y_{1 \ldots r}^{\otimes n} \otimes \X_{1 \ldots r}^{\otimes n} }
	, X_i \in\pos{\Y_{1 \ldots i}^{\otimes n} \otimes \X_{1 \ldots i}^{\otimes n} }
    \end{align*}
  \end{minipage}
\end{center}

The process to obtain a simplified version of the dual as in the case with a single repetition gives us now:

\begin{center}
  \begin{minipage}{2.6in}
    \centerline{\underline{Dual Problem 5}}\vspace{-7mm}
    \begin{align*}
      \text{minimize:} & \tr(Y)  \\
      \text{subject to:}  \quad &  Y  \otimes \mathbb{I}_{\Y_1^{\otimes n}} - \tr_{\X_2^{\otimes n}}(Y_2) \geq 0 \\
      \quad &   Y_2  \otimes \mathbb{I}_{\Y_2^{\otimes n}} - \tr_{\X_3^{\otimes n}}(Y_3)  \geq 0 \\
      \vdots  \\
       \quad &   Y_r \otimes \mathbb{I}_{\Y_r^{\otimes n}} \geq    \sum_{ \left( i_1, i_2, \ldots,i_n \right)  \in \Sigma^n_{\geq k}}   P_{i_1} \otimes P_{i_2}  \otimes \ldots \otimes P_{i_n}  \\ 
       	& Y \in\herm{\X_1^{\otimes n} }
	, Y_i \in\herm{\Y_{1 \ldots i-1}^{\otimes n} \otimes \X_{1 \ldots i}^{\otimes n} }
    \end{align*}
  \end{minipage}
\end{center}

We have again strong duality with optimal solutions being achieved, as can be seen in the same way as for the problems in Chapter \ref{ch:avg}, that is, making each of the elements of our solution be a block of a larger matrix so that we have programs in the form in which we state the theorem for the existence of Slater points, and then letting the Slater points correspond to multiples of the identity. We will then denote by $p'$ the optimal value of these problems.

Intuitively, the value of $p'$ will be at least  $\sum_{ k \leq t \leq n} {{n}\choose{t}} p^t (1-p)^{n-t}$, since that is what a process for Bob that repeats $n$ independent copies of the optimal process for one interaction would achieve. And indeed, if we let a solution to Primal Problem 4 be given by $X,\{X_i\}$, then $X^{\otimes n},\{X_i^{\otimes n}\}$ gives us a solution to Primal Problem 5 with value $\sum_{ k \leq t \leq n} {{n}\choose{t}} p^t (1-p)^{n-t}$. We can then formally phrase the question:

\begin{mylist}{\parindent}
\item[\ ] Is it optimal for Bob to play independently when trying to force a certain outcome in at least $k$ out of $n$ independent parallel copies of an interaction?
\end{mylist}

as

\begin{mylist}{\parindent}
\item[\ ] Is $p' = \sum_{ k \leq t \leq n} {{n}\choose{t}} p^t (1-p)^{n-t}$?
\end{mylist}

This has been established in the literature to be the case when $k=n$  \cite{Gutoski10,MittalS07}. The way in which this is done is by letting an optimal solution to Dual Problem 4 be given by $Y,\{Y_i\}$, and then considering the solution $Y^{\otimes n},\{Y_i^{\otimes n}\}$ to Dual Problem 5. As the right hand side of the last constraint is $P_1^{\otimes n} \geq \mathbb{I}_{\Y_r} \otimes Y \geq 0$, the properties of the tensor product that we mention in Chapter \ref{ch:back} are enough to determine that $Y^{\otimes n},\{Y_i^{\otimes n}\}$ is indeed a feasible solution, with value $p^n$. A natural way to extend this to the case in which $k<n$ would be to let our solution be 

\[
\sum_{\left( i_1, \ldots, i_n \right) \in \Sigma_{\geq k}^n} f(i_1) \otimes \ldots \otimes f(i_n), \left\{ \sum_{\left( i_1, \ldots, i_n \right) \in \Sigma_{\geq k}^n}  f_i(i_1) \otimes \ldots \otimes f_i(i_n) \right\},
\]

\noindent where $f(0)=\rho-Y$, $f(1)=Y$, $f_i(0)=R_i-Y_i$ and $f_i(1)=Y_i$. It is not clear that this would be a feasible solution. However, if we make the assumption that all the constraints except the last one  are satisfied with equality in the Dual Problem 4 for $Y^{\otimes n},\{Y_i^{\otimes n}\}$, which from Lemma 3.13 in \cite{Gutoski10} is a valid assumption to make, we will have that  all the constraints except the last one of Dual Problem 5 are satisfied by this solution. However, it is still not clear how to prove that the proposed solution does actually satisfy the last constraint. If at this point we could make the additional assumption that $Y \leq \rho$ and $Y_i \leq R_i$, the proof that we will give later for the classical case would give us that the candidate we are considering is indeed a feasible solution. However, there are cases in which it is not possible to make this assumption, as it follows from the existence of the counterexample to $p' = \sum_{ k \leq t \leq n} {{n}\choose{t}} p^t (1-p)^{n-t}$ that we show now.

\section{Counterexample to the independence of acting optimally for Bob}

It is indeed possible to find a simple example, with $n=2$, $r=1$ and $k=1$, in which the value of $p$ is $\cos^2(\pi/8) \approx 0.85$, but the optimal probability for Bob to obtain a winning outcome in one of at least two repetitions of the interaction is not $\cos^2(\pi/8) + 2\cos^2(\pi/8) \sin^2(\pi/8) \approx 0.98$. Instead, it is equal to $1$. A single repetition of the interaction corresponding to this example follows the following process:

\begin{mylist}{\parindent}
\item[1.]
Alice prepares a pair of qubits $(\reg{X},\reg{Z})$ in the state
\[
u = \frac{1}{\sqrt{2}} \ket{00} + \frac{1}{\sqrt{2}} \ket{11}
\in\X\otimes\Z,
\]

\noindent and sends $\reg{X}$ to Bob.

\item[2.]
Bob applies a quantum channel of his choice to $\reg{X}$, obtaining a
qubit $\reg{Y}$ that he sends back to Alice.
After this action, the pair $(\reg{Y},\reg{Z})$ will be in some
particular state $\sigma\in\density{\Y\otimes\Z}$.

\item[3.]
Alice measures $(\reg{Y},\reg{Z})$ with respect to the projective
measurement $\{\Pi_0,\Pi_1\}$, where $\Pi_0$ corresponds to the losing outcome, while $\Pi_1$ corresponds to the winning outcome. $\Pi_0 = \mathbb{I} - \Pi_1$ and
$\Pi_1 = v v^{\ast}$, for
\[
v = \cos(\pi/8) \ket{00} + \sin(\pi/8) \ket{11}.
\]

\end{mylist}

The probability that Bob obtains the winning outcome is
\[
\ip{\Pi_1}{\sigma} = \fid(v v^{\ast},\sigma)^2,
\]
where $\fid(\cdot,\cdot)$ denotes the \emph{fidelity} function $\fid(P,Q) =  \norm{\sqrt{P}\sqrt{Q}}_1$ and
we have the equality from the fact that $v v^{\ast}$ is pure.

Now, if Bob makes
$\sigma\in\density{\Y\otimes\Z}$ be the state after step 2, it must hold that

\[ 
\tr_{\Y}(\sigma) =  \tr_{\X}(u u^{\ast}) = \frac{1}{2} \mathbb{I}_{\Z}.
\]

It is known that the fidelity function is monotone under partial tracing, so we have then that
\[
\fid(v v^{\ast},\sigma)^2 \leq \fid\left( \tr_{\Y}(v v^{\ast}),
\tr_{\Y}(\sigma)\right)^2
= \fid(Q,R)^2
\]
for
\[
Q = 
\begin{pmatrix}
  \cos^2(\pi/8) & 0 \\ 0 & \sin^2(\pi/8)
\end{pmatrix}
\quad\quad\text{and}\quad\quad
R = 
\begin{pmatrix}
  \frac{1}{2} & 0 \\ 0 & \frac{1}{2}
\end{pmatrix}.
\]

Computing $\sqrt{Q}\sqrt{R}$, we have then that

\begin{eqnarray*}
\fid(Q,R)^2 &  = & \norm{\sqrt{Q}\sqrt{R}}_1^2  = \frac{1}{2}\left(\cos(\pi/8) + \sin(\pi/8)\right)^2 \\
&   = & \frac{1 + \sin(\pi/4) }{2} = \frac{1 + \cos(\pi/4) }{2}  =  \cos^2(\pi/8),
\end{eqnarray*}

\noindent using the trigonometrical identity $\cos(\alpha/2) = \sqrt{ \frac{1 + \cos \alpha}{2}}$ in the last equality, and the identity  $\sin(2\alpha) = 2 \sin(\alpha)\cos(\alpha)$ in the third equality. We have then that the optimal probability for Bob of achieving the winning outcome is at most $\cos^2(\pi/8) \approx 0.85$. This bound is actually tight, since if Bob acts as the identity, he achieves the winning outcome with probability

\[
\ip{vv^*}{uu^*}^2 = \frac{ \left(\cos(\pi/8)+\sin(\pi/8) \right)^2}{2} = \cos^2(\pi/8).
\]

Now, for two instantiations of the interaction described above in which Alice operates independently, we
consider what happens when Bob applies the phase flip
$\ket{00} \mapsto -\ket{00}$,
$\ket{01} \mapsto \ket{01}$,
$\ket{10} \mapsto \ket{10}$,
$\ket{11} \mapsto \ket{11}$
on the two qubits he receives. The state he receives is
\[
\frac{1}{2}\ket{0000} +
\frac{1}{2}\ket{0011} +
\frac{1}{2}\ket{1100} +
\frac{1}{2}\ket{1111}
\]
and Bob's phase flip transforms this state to
\[
-\frac{1}{2}\ket{0000} +
\frac{1}{2}\ket{0011} +
\frac{1}{2}\ket{1100} +
\frac{1}{2}\ket{1111}.
\]
Writing
\[
w = - \sin(\pi/8)\ket{00} + \cos(\pi/8)\ket{11}
\]
we find that
\[
-\frac{1}{2}\ket{0000} +
\frac{1}{2}\ket{0011} +
\frac{1}{2}\ket{1100} +
\frac{1}{2}\ket{1111}
=
\frac{1}{\sqrt{2}}v\otimes w
+
\frac{1}{\sqrt{2}}w\otimes v.
\]
When Alice measures this state with respect to the measurement
$\{\Pi_0,\Pi_1\}$, there will then be exactly one winning outcome and one failing outcome. Bob passes  (and fails) exactly one of the two tests with certainty. The ability of Bob to correlate his answers in this way is suggestive of a perfect form of hedging, where the risk of a loss in one game of chance is perfectly offset the actions in a second game. 

\noindent Note that any strategy for Bob in which he does better that playing independently when trying to win at least $k$ times out of $n$ will imply the existence of a hedging phenomenon, in the sense that for this strategy  there will be a $k'$ for which Bob will do worse than when playing independently at winning at least $k'$ times out of $n$. This follows from our result of Chapter \ref{ch:avg} that it is optimal for Bob to play independently if he is trying to maximize his expected number of wins.

\section{Analysis in the classical case}

In the classical case, it is not possible to find an example like the one we just presented. This can be derived from our analysis of the situation using semidefinite programs, observing first that in the classical case there exists an optimal dual solution  $Y,\{Y_i\}$  to Dual Problem 4 in which all of the blocks in our solution are diagonal matrices. To see this, consider an arbitrary solution $Y',\{Y'_i\}$ to Dual Problem 4. Now, I claim that   $\Lambda(Y'),\{\Lambda(Y'_i)\}$ is a feasible solution with the same value, where the dephashing channel $\Lambda$ sets the non-diagonal entries of the input to zero, and leaves the diagonal entries unaltered, giving then as the output an operator represented by a diagonal matrix. Indeed, we have that 

\begin{mylist}{\parindent}
\item[$\bullet$ ] $\tr(\Lambda(Y)) = \tr(Y)$, since the diagonal elements of the corresponding matrices are the same.
\item[$\bullet$ ] $\Lambda$ is a positive operator, since the diagonal elements of a positive semidefinite matrix are non-negative. Then, as $\Lambda$ commutes with the partial trace, $\Lambda(P_1)=P_1$, and $\Lambda(\mathbb{I} \otimes A) = \mathbb{I} \otimes \Lambda(A)$ for any operator $A$, we have that all the constraints are satisfied. This is because then we can write them as $\Lambda(A-B) \geq 0$, with $A \geq B$, and the constraint being therefore satisfied.
\end{mylist}

Now that we make the assumption that $Y,\{Y_i\}$ correspond to diagonal matrices  (remember that $P_0$, $P_1$, $\{R_i\}$ and $\rho$ do as well), we have that we can make the additional assumption that $Y \leq \rho,  Y_i \leq R_i $. Indeed, consider any solution to Dual Problem 4 with operators  $Y,\{Y_i\}$ that correspond to diagonal matrices. Then, I claim that if we let $Y'$ be the element-wise minimum of $Y$ and $\rho$, and $Y_i'$ be the element-wise minimum of $Y_i$ and $R_i$ (note that then $Y' \leq \rho$ and $Y_i' \leq R_i$) , $Y',\{Y_i'\}$ is a feasible solution  to Dual Problem 4 with a value equal to at most the one of $Y,\{Y_i\}$. Indeed, we have

\begin{mylist}{\parindent}
\item[$\bullet$ ] $Y' \leq Y$, so $\tr(Y') \leq \tr(Y)$.
\item[$\bullet$ ]  An element along the diagonal of the matrix for $Y'  \otimes \mathbb{I}_{\Y_1} $ is equal either to the element in the same position for  $Y \otimes \mathbb{I}_{\Y_1}$, or to the element in the same position for $\rho \otimes \mathbb{I}_{\Y_1}$.

In case it is equal to the corresponding element of $Y \otimes \mathbb{I}_{\Y_1}$, from the feasibility of $Y, \{Y_i\}$ we have that this is at least the corresponding element of $ \tr_{\X_{2}} (Y_2)$, which will be at least the corresponding element  of $ \tr_{\X_2} (Y'_2)$, since $Y_2 \geq Y'_2$. We have then that in this case the element that we are considering of  the matrix for $Y'  \otimes \mathbb{I}_{\Y_1} $ will be at least equal to the element in the same position for the matrix for  $ \tr_{\X_{2}} (Y'_2)$

In case it is equal to the corresponding element of $\rho \otimes \mathbb{I}_{\Y_1}$, we have that it then equal to the element in the same position for $\tr_{\X_2}(R_2)$. This is at least equal to the corresponding element for $\tr_{\X_2}(Y'_2)$, from the definition of $Y'_2$. We have then again that the element that we are considering of  the matrix for $Y'  \otimes \mathbb{I}_{\Y_1} $ will be at least equal to the element in the same position for the matrix for  $ \tr_{\X_{2}} (Y'_2)$.

As we are dealing with diagonal matrices, this establishes that the first constraint is satisfied. A similar argument (replacing $Y$ by $Y_i$ and $\rho$ by $R_i$) gives us that all other constraints except the last one are satisfied.

\item[$\bullet$ ]  An element along the diagonal of the matrix for $Y_r'  \otimes \mathbb{I}_{\Y_r} $ is equal either to the element in the same position for  $Y_r \otimes \mathbb{I}_{\Y_r}$, or to the element in the same position for $R_r \otimes \mathbb{I}_{\Y_r}$.

In case it is equal to the corresponding element of $Y_r \otimes \mathbb{I}_{\Y_r}$, from the feasibility of $Y, \{Y_i\}$ we have that this is at least the corresponding element of $P_1$. 

In case it is equal to the corresponding element for $R_r \otimes \mathbb{I}_{\Y_r}$, we have that as $P_1 \leq R_r \otimes \mathbb{I}_{\Y_r}$, the element we are considering of  the matrix for $Y_r  \otimes \mathbb{I}_{\Y_r} $ is at least equal to the corresponding element for $P_1$.

\end{mylist}

\noindent Making then all these assumptions about $Y, \{Y_i\}$, we have that we can prove that 

\[
\sum_{\left( i_1, \ldots, i_n \right) \in \Sigma_{\geq k}^n} f(i_1) \otimes \ldots \otimes f(i_n), \left\{ \sum_{\left( i_1, \ldots, i_n \right) \in \Sigma_{\geq k}^n}  f_i(i_1) \otimes \ldots \otimes f_i(i_n) \right\}
\]

\noindent is actually a feasible solution to Dual Problem 5, deriving then that

\[
p' = \sum_{ k \leq t \leq n} {{n}\choose{t}} p^t (1-p)^{n-t},
\]

\noindent as desired.  To prove that the solution is indeed feasible, we need the following Lemma:

\begin{lemma} \label{lm:pos}

Assume all of $A_0, A_1, R ,B_1=A_1+R, B_0=A_0-R$ are positive semidefinite operators.  Then for every choice of integers $n \geq 1$, $k \in \{0, \ldots, n\}$, it holds that

\[
 \sum_{ \left( i_1, \ldots, i_n \right) \in \Sigma_{\geq k}^n} B_{i_1} \otimes \ldots \otimes B_{i_n} \geq   \sum_{ \left( i_1, \ldots, i_n \right)  \in \Sigma_{\geq k}^n} A_{i_1} \otimes \ldots \otimes A_{i_n}
\]

\end{lemma}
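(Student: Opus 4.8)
The plan is to prove this by induction on $n$, peeling off the last tensor factor. For positive semidefinite $C_0,C_1$ and integers $n,k$, introduce the shorthand
\[
T_k^{(n)}(C_0,C_1) = \sum_{(i_1,\ldots,i_n)\in\Sigma^n_{\geq k}} C_{i_1}\otimes\cdots\otimes C_{i_n},
\]
with the conventions that $T_k^{(n)} = (C_0+C_1)^{\otimes n}$ when $k\leq 0$ (``at least $k$ ones'' is vacuous) and $T_k^{(n)} = 0$ when $k>n$. Splitting the sum according to whether the final index $i_n$ equals $1$ or $0$ yields the recursion $T_k^{(n)}(C_0,C_1) = T_{k-1}^{(n-1)}(C_0,C_1)\otimes C_1 + T_k^{(n-1)}(C_0,C_1)\otimes C_0$, and the conventions make this valid at the boundary values of $k$ with no separate casework. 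Abbreviating $T_k^{(n)}(B)\defeq T_k^{(n)}(B_0,B_1)$ and $T_k^{(n)}(A)\defeq T_k^{(n)}(A_0,A_1)$, the goal is $T_k^{(n)}(B)\geq T_k^{(n)}(A)$.

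First I would dispatch the base case $n=1$. For $k\leq 0$ both sides equal $A_0+A_1 = B_0+B_1$ (note $B_0+B_1 = (A_0-R)+(A_1+R) = A_0+A_1$), and for $k=1$ the inequality reads $B_1\geq A_1$, i.e.\ $A_1+R\geq A_1$, which holds since $R\geq 0$. For the inductive step I would substitute $B_1 = A_1+R$ and $B_0 = A_0-R$ into the recursion and collect terms, obtaining
\[
T_k^{(n)}(B) - T_k^{(n)}(A) = \bigl[T_{k-1}^{(n-1)}(B) - T_{k-1}^{(n-1)}(A)\bigr]\otimes A_1 + \bigl[T_k^{(n-1)}(B) - T_k^{(n-1)}(A)\bigr]\otimes A_0 + \bigl[T_{k-1}^{(n-1)}(B) - T_k^{(n-1)}(B)\bigr]\otimes R.
\]
The first two brackets are positive semidefinite by the induction hypothesis, and since $A_1,A_0\geq 0$, property~2 of the tensor product from Chapter \ref{ch:back} makes those two terms positive semidefinite.

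The main point is the third, ``cross'' term $\bigl[T_{k-1}^{(n-1)}(B) - T_k^{(n-1)}(B)\bigr]\otimes R$, which is produced by transferring $R$ from the losing slot to the winning slot and is not directly covered by the induction hypothesis. Here $R\geq 0$, and the bracket equals $\sum_{(i_1,\ldots,i_{n-1})\in\Sigma^{n-1}_{k-1}} B_{i_1}\otimes\cdots\otimes B_{i_{n-1}}$ (the ``at least $k-1$'' sum minus the ``at least $k$'' sum leaves exactly the configurations with precisely $k-1$ ones), which is a sum of tensor products of the positive semidefinite operators $B_0,B_1$ and hence positive semidefinite. Adding the three positive semidefinite terms gives the claim. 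The hard part is recognizing that the positivity of this cross term is exactly what forces the hypothesis $B_0 = A_0-R\geq 0$: without $B_0\geq 0$ the summands in the bracket need not be positive semidefinite, and indeed the counterexample of the previous section shows the inequality can fail once that assumption is dropped. I would also emphasize the auxiliary monotonicity $T_{k-1}^{(n-1)}\geq T_k^{(n-1)}$, which holds trivially because ``at least $k-1$ ones'' contains ``at least $k$ ones'' as a sub-sum of positive semidefinite terms, and which is the structural reason the transfer can only help.
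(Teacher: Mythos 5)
Your proposal is correct and follows essentially the same route as the paper's proof: induction on $n$ peeling off the last tensor factor, with the key step being that the ``transfer'' of $R$ leaves a residual term supported on strings with exactly $k-1$ ones, which is positive semidefinite because all of $A_0,A_1,B_0,B_1,R$ are. The only cosmetic differences are that you substitute $B_0=A_0-R$, $B_1=A_1+R$ before invoking the induction hypothesis (so your cross term $\bigl[T_{k-1}^{(n-1)}(B)-T_k^{(n-1)}(B)\bigr]\otimes R$ is built from the $B$'s where the paper's is built from the $A$'s) and that your boundary conventions for $k\leq 0$ and $k>n$ absorb the casework the paper handles implicitly.
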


\begin{proof}

 (by induction on $n$)

\begin{mylist}{\parindent}

 \item[$\bullet$ ] For $n=1$, we must consider the cases $k=0$ and $k=1$.

For $k=0$ we have

\[
B_0 + B_1 = (A_0-R) + (A_1+R) = A_0 + A_1
\]

and for $k=1$ we have
\[
B_1=A_1+R \geq A_1,
\]

as required.

\item[$\bullet$ ] For $n>1$, we have

\begin{eqnarray*}
  \sum_{ \left( i_1, \ldots, i_n \right) \in \Sigma_{\geq k}^n} B_{i_1} \otimes \ldots \otimes B_{i_n} &= &  \sum_{ \left( i_1, \ldots, i_{n-1} \right)\in \Sigma_{\geq k}^{n-1}}B_{i_1} \otimes \ldots \otimes B_{i_{n-1}} \otimes B_{0} \\
   &  & + \sum_{ \left( i_1, \ldots, i_{n-1} \right) \in \Sigma_{\geq k-1}^{n-1}}B_{i_1} \otimes \ldots \otimes B_{i_{n-1}} \otimes B_{1}
\end{eqnarray*}

Applying the induction hypothesis, we obtain

\begin{align*}
  \sum_{ \left( i_1, \ldots, i_n \right) \in \Sigma_{\geq k}^n} B_{i_1} \otimes \ldots \otimes B_{i_n}
  & \geq & &   \sum_{ \left( i_1, \ldots, i_{n-1} \right)\in \Sigma_{\geq k}^{n-1}} A_{i_1} \otimes \ldots \otimes A_{i_{n-1}} \otimes (A_0 - R) \\
 & & &  + \sum_{  \left( i_1, \ldots, i_{n-1} \right)\in \Sigma_{\geq k-1}^{n-1}}A_{i_1} \otimes \ldots \otimes A_{i_{n-1}} \otimes (A_1 + R) \\
   & = & & \sum_{ \left( i_1, \ldots, i_n \right) \in \Sigma_{\geq k}^n}  A_{i_1} \otimes \ldots \otimes A_{i_{n}}  \\
   & & & + \sum_{  \left( i_1, \ldots, i_{n-1} \right)\in \Sigma_{ k-1}^{n-1}} A_{i_1} \otimes \ldots \otimes A_{i_{n-1}} \otimes R  \\
    & \geq  & & \sum_{  \left( i_1, \ldots, i_n \right) \in \Sigma_{\geq k}^n}  A_{i_1} \otimes \ldots \otimes A_{i_{n}},  
\end{align*}

as required.
\end{mylist}

Note that if we substitute the set of binary strings with at least $k$ ones by any other monotone subset of $\{0,1\}^n$, the proof still holds.

\end{proof}

Using this Lemma, we can prove now the feasibility of the proposed solution to Dual Problem 5:

\begin{mylist}{\parindent}
\item[$\bullet$]  The fact that our proposed solution to Dual Problem 5 satisfies the first constraint follows from the Lemma, with $A_0= \tr_{\X_2} (R_2-Y_2)$, $A_1= \tr_{\X_2} (Y_2) $,  and $R= \mathbb{I}_{\Y_1} \otimes Y  - \tr_{\X_2}(Y_2) $, with $B_0$ being then $ \tr_{\X_2} (R_2) - \mathbb{I}_{\Y_1} \otimes Y = \mathbb{I}_{\Y_1} \otimes (\rho -Y)$, and $B_1$ being $ \mathbb{I}_{\Y_1} \otimes Y$. That $B_0 \geq 0$ follows from our assumption that $Y \leq \rho$. That $A_1$ is $\geq 0$ follows from the observation that we made before that all blocks of a feasible solution to the dual problems that we consider have to be positive semidefinite.

\item[$\bullet$] The fact that our proposed solution to Dual Problem 5 satisfies all constraints from the second to the second last one follows from the Lemma in the same way, with $A_0= \tr_{\X_{i+1}} (R_{i+1}-Y_{i+1})$, $A_1= \tr_{\X_{i+1}} (Y_{i+1}) $,  and $R= \mathbb{I}_{\Y_i} \otimes Y_i  - \tr_{\X_{i+1}}(Y_{i+1}) $. $B_0$ is now $\mathbb{I}_{\Y_i} \otimes (R_i -Y_i)$, and  $B_1$ is $ \mathbb{I}_{\Y_i} \otimes Y_i$. That $B_0 \geq 0$ follows from our assumption that $Y_i \leq R_i$. That $A_1$ is $\geq 0$ follows from the observation that we made before that all blocks of a feasible solution to the dual problems that we consider have to be positive semidefinite.

\item[$\bullet$] The fact that our proposed solution to Dual Problem 5 satisfies the last constraint follows from the Lemma as well, with $A_0=P_0$, $A_1=P_1$, and $R=\mathbb{I}_{\Y_r} \otimes Y_r - P_1$.  $B_0$ is then $ P_0  + P_1 - \mathbb{I}_{\Y_r} \otimes Y_r = \mathbb{I}_{\Y_1} \otimes (R_r -Y_r)$, and $B_1$ is $ \mathbb{I}_{\Y_r} \otimes Y_r $. That $B_0 \geq 0$ follows from our assumption that $Y_r \leq R_r$. 

\end{mylist}

We have then that our proposed solution to Dual Problem 5 is feasible in the classical case, as desired.

Note that to prove our bound  of $\sum_{t=k}^n {n \choose t} p^t (1-p)^{n-t}$, the only thing that we needed was a way to turn a feasible solution $Y,\{Y_i\}$ for Dual Problem 4 into another solution $Y', \{Y_i'\}$ with value $p' \leq p$ that satisfies $Y' \leq \rho$, $Y'_i \leq R_i$. We showed that this was possible in the case where $\{P_j\}, \{R_i\}, \rho, \{Y_i\}, Y$ are all diagonal matrices.  However, one could generalize our conditions further. If $Y$, $\{Y_i\}$ and $\rho$, $\{R_i\}$ were simultaneously diagonalizable, one could use a similar construction as before to define a new solution  $Y', \{Y_i'\}$, just by making them take the minimum of the eigenvalues of $Y$, $\{Y_i\}$ and $\rho$, $\{R_i\}$ in their shared eigenspaces.

Then, it is enough for our analysis of the first constraint to work out that after $Y_2$ is modified to obtain $Y_2'$,  $\tr_{\X_2}(Y_2')$ is diagonal in a basis where $\rho \otimes \mathbb{I}_{\Y_1} $ and $Y \otimes \mathbb{I}_{\Y_1} $ (and therefore $Y' \otimes \mathbb{I}_{\Y_1}$) are diagonal as well. Indeed, if that is the case, then we obtain feasibility for the first constraint of Dual Problem 5 from doing a change of basis to such a bassis where $\tr_{\X_2}(Y_2')$, $\rho \otimes \mathbb{I}_{\Y_r} $, $Y \otimes \mathbb{I}_{\Y_r} $  and $Y' \otimes \mathbb{I}_{\Y_r}$ are diagonal, and analyzing the two following cases:

\begin{mylist}{\parindent}

\item[$\bullet$]  An element along the diagonal of the matrix for $Y'  \otimes \mathbb{I}_{\Y_1} $ is equal to the element in the same position for  $Y \otimes \mathbb{I}_{\Y_1}$, which as we started with a valid solution, and we are talking about diagonal elements, is at least the corresponding element of $\tr_{\X_2}(Y_2)$. As $Y_2' \leq Y_2$ (so $\tr_{\X_2}(Y_2') \leq  \tr_{\X_2}(Y_2)$), and again, we are talking about diagonal elements, this is at least the corresponding diagonal element of  $\tr_{\X_2} (Y_2')$. 

\item[$\bullet$] An element along the diagonal of the matrix for $Y'  \otimes \mathbb{I}_{\Y_1} $ is equal to the element in the same position for  $\rho \otimes \mathbb{I}_{\Y_1} = \tr_{\X_2}(R_2) $.  Now,  $R_2 \geq Y_2'$ implies that $\tr_{\X_2}(Y'_2) \leq  \tr_{\X_2}(R_2)$. Then we have again that as we are talking about diagonal elements, the element of $Y'  \otimes \mathbb{I}_{\Y_1} $ we are considering is at least the corresponding element of $\tr_{\X_2}(Y'_2)$.

\end{mylist}

The same logic applies to the rest of the constraints, up to the last constraint, where we require that $P_1$ is diagonal in a basis where $R_{r-1}\otimes \mathbb{I}_{\Y_r}$ and $Y_{r-1} \otimes \mathbb{I}_{\Y_r}$ (and therefore $Y_{r-1}' \otimes \mathbb{I}_{\Y_r}$) are diagonal as well.

These more general conditions we are considering include as a subcase the case where there is a basis for $\X_1$ in which $\rho$ and $Y$ are diagonal, and by tensoring this base with a base for $\Y_1 \otimes \X_2$ we can obtain another one where $Y_2$ and $R_2$ are diagonal. Then, we can keep extending this basis in the same way in the next constraints, until we obtain a basis for $Y_{1 \ldots r -1} \otimes \X_{1 \ldots r}$ where $Y_r$ and $R_r$ are diagonal, and which can be extended to a basis where $P_1$ is diagonal. This subcase includes then the one where all the blocks $Y,\{Y_i\}$ of the dual solution are multiples of the identity, and so are the $\rho, \{R_i\}$. This occurs for example in the semidefinite programs studied in \cite{MolinaVW12}, where $r=1$, and the initial state sent from Alice to Bob is a superposition with equal probabilities of states taken from an orthogonal basis.

\newpage

\chapter{Quantitative bounds to hedging phenomena}\label{ch:ub}

As we saw in Chapter \ref{ch:nindep}, the naive upper bound for the optimum probability for Bob of achieving the winning outcome in at least $k$ of  $n$ independent copies of an interaction  as a function of $p$, his optimum probability to win when only one copy is considered, does not actually hold. However, it is still possible to establish weaker bounds. We will do so in this chapter, proving first a bound of

\[
 \sum_{t=k}^n {n \choose t} p^t,
\]

\noindent and modifying then our method to prove a stronger bound of

\[
p^k {n \choose k}.
\]

The procedure that we will follow to obtain these bounds will be based on building a feasible solution to Dual Program 5 from an optimal solution to Dual Problem 4. Let then  $Y$, $\{Y_i\}$ represent an optimal solution to Dual Problem 4. I claim that it holds  that a feasible solution to Dual Problem 5 is given by 

\[
\sum_{\left( i_1, \ldots, i_n \right) \in \Sigma_{\geq k}^n} f(i_1) \otimes \ldots \otimes f(i_n), \left\{ \sum_{\left( i_1, \ldots, i_n \right) \in \Sigma_{\geq k}^n}  f_i(i_1) \otimes \ldots \otimes f_i(i_n) \right\},
\]

\noindent where $f(0)=\rho$, $f(1)=Y$, $f_i(0)=R_i$ and $f_i(1)=Y_i$, and that this solution has value $ \sum_{t=k}^n {n \choose t} p^t$. Indeed, we can see that 

\[
\tr \left( \sum_{\left(i_1, \ldots,i_n \right) \in \Sigma_k^n} f(i_1) \otimes \ldots \otimes f(i_n) \right) = \sum_{t  = k }^n \left| \Sigma^n_t \right| \tr(Y)^t 1^{n-t} =  \sum_{t=k}^n {n \choose t} p^t.
\]

As far as feasibility is concerned, for all the conditions except the last $\geq$ inequality in the Dual Problem 5, it follows that they are satisfied using the same analysis as the one we performed in Chapter \ref{ch:avg} to prove that our solution to Dual Problem 3 was feasible. For the last condition, we have that as

\[
f_{r}(1) \otimes \mathbb{I}_{\Y_r}=R_r  \geq P_1 \text{    and    } f_{r}(0) \otimes \mathbb{I}_{\Y_r} =  P_0 + P_1 \geq P_0,
\]

\noindent then

\[
\sum_{\left( i_1, \ldots, i_n \right) \in \Sigma_{\geq k}^n}  f_r(i_1) \otimes \ldots \otimes f_r(i_n) \geq \sum_{\left( i_1, \ldots, i_n \right) \in \Sigma_{\geq k}^n}  P_{i_1} \otimes \ldots \otimes P_{i_n}.
\]

This is because then the terms at the left hand side of the last constraint can be paired to the terms at the right hand side, in such a way that the term at the left hand side is $\geq$ than the term at the right hand side. 

To improve the analysis, one approach would be then obtaining a solution in which the terms at the left hand side in the last constraints can still be paired with the terms at the right hand side in this way (and the other constraints are also still satisfied), but the trace for the operator in our solution belonging to $\pos{\X_1}$ is smaller. However, attempts at that approach have been unsuccessful so far at giving us a better value as a function of $p$. A possible way of doing so would be letting $Y', \{Y_i'\}$ denote  a solution to the dual problem corresponding to the situation in which Bob is trying to maximize his probability of obtaining outcome $0$, and then letting $f(0)$ be $Y'$, and $f_i(0)$ be $Y_i'$. Then, as 

\[
 f_{r}(0) \otimes \mathbb{I}_{\Y_r} =  Y_r' \otimes  \mathbb{I}_{\Y_r}  \geq P_0,
\]

\noindent  we would still have a solution that satisfies the last constraint (and as can be checked again with an analysis similar to the one in Chapter \ref{ch:avg}, the other constraints are satisfied as well). The value of this solution would be 

\[
 \sum_{t=k}^n {n \choose t} p^t \tr(Y')^{n-t}.
\]

\noindent However, the value of $\tr(Y')$  does not in principle bear any relationship with $p$, and it might as well be $1$, so this is not necessarily a better bound than the one we already obtained.

 We can try then an alternative approach, in which we assign several terms of the right hand side in the last inequality to a term in the left hand side. For example, if we had $n=3$ and $k=2$, in the solution to Dual Problem 5 corresponding to our bound of $\sum_{t=k}^n {n \choose t} p^t$, the right hand side of the last constraint would contain $P_0 \otimes P_1 \otimes P_1$ and  $P_1 \otimes P_1 \otimes P_1$, matched at the left hand side  by $R_r \otimes \mathbb{I}_{\Y_r} \otimes Y_r \otimes \mathbb{I}_{\Y_r}\otimes Y_r \otimes \mathbb{I}_{\Y_r} $ and $Y_r \otimes \mathbb{I}_{\Y_r} \otimes Y_r  \otimes \mathbb{I}_{\Y_r} \otimes Y_r \otimes \mathbb{I}_{\Y_r} $. However, it would be enough to have  $R_r \otimes \mathbb{I}_{\Y_r} \otimes Y_r \otimes \mathbb{I}_{\Y_r}\otimes Y_r \otimes \mathbb{I}_{\Y_r} $ at the left hand side, since  

\[
P_0 \otimes P_1 \otimes P_1 + P_1 \otimes P_1 \otimes P_1 = (P_0+P_1) \otimes P_1 \otimes P_1 = R_r \otimes   \mathbb{I}_{\Y_r}  \otimes P_1 \otimes P_1 
\]

Based on this idea, we build the following solution $S^{n,k}$ to Dual Problem 5, defined recursively as a function of $n$ and $k$:

\begin{mylist}{\parindent}

\item[$\bullet$ ] If $k=0$, then the solution is $S^{n,k}=\rho^{\otimes n},\{ R_i^{\otimes n} \}$

\item[$\bullet$] If $k=n$, then the solution is $S^{n,k}=Y^{\otimes n},\{ Y_i^{\otimes n} \}$

\item[$\bullet$ ] If $0 < k < n$, then the solution is 

\begin{eqnarray*}
S^{n,k} & = &   \left( \rho,\{ R_i\}  \right) \otimes S^{n-1,k}  \\ 
&  & + \left( Y \otimes \sum_{\left( i_1, \ldots, i_{n-1} \right)\in \Sigma_{k-1}^{n-1}} f(i_1) \otimes \ldots \otimes f(i_{n-1}), \left\{ Y_i \otimes \sum_{\left( i_1, \ldots, i_{n-1} \right)\in \Sigma_{k-1}^{n-1}}  f_i(i_1) \otimes \ldots \otimes f_i(i_{n-1}) \right\} \right)\\
\end{eqnarray*}

\noindent where the tensor product with $S^{n-1,k}$ and the sum between the first row and the second row are taken block-wise.

\end{mylist}

\noindent Now, we have that the value of this solution is  $p^k {n \choose k}$. Indeed,

\begin{mylist}{\parindent}

\item[$\bullet$] If $k=0$, then the value is $\tr(\rho)^n=p^0 {n \choose 0}$.

\item[$\bullet$ ] If $k=n$, then the value is $\tr(Y)^n=p^n = p^n {n \choose n} $.

\item[$\bullet$ ] If $0 < k < n$, we can use induction on $n$, with the base case being $n=1$, covered by the previous two cases. We have then that the value is 

\begin{eqnarray*}
& & \tr(\rho) p^k { n -1 \choose k}  + \tr(Y) {n-1 \choose k-1} \tr(Y)^{k-1} \tr(\rho)^{n-k} \\
& = & p^k{ n -1 \choose k} +  p {n-1 \choose k-1} p^{k-1} = p^k { n \choose k}
\end{eqnarray*}

\end{mylist}

We also have that the solution is actually feasible. We prove in the same way as for the previous solutions we consider to Dual Problem 5 that all $\geq$ constraints except the last one are satisfied, using that if one of the blocks of our solution includes  $f_i(i_1) \otimes \ldots \otimes f_i(i_n)$, the previous one will include $f_{i-1}(i_1) \otimes \ldots \otimes f_{i-1}(i_n)$ (with $f$ instead of $f_{i-1}$ if $i = 2$). For the last constraint, we have that 

\begin{mylist}{\parindent}

\item[$\bullet$ ] If $k=0$, then 

\[
R_r^{\otimes n} \otimes \mathbb{I}_{\Y_r^{\otimes n} } = \left( P_0 + P_1 \right)^{\otimes n} =    \sum_{\left( i_1, \ldots, i_n \right) \in \Sigma_{\geq 0}^n} P_{i_1} \otimes \ldots \otimes P_{i_n}
\]

\item[$\bullet$] If $k=n$, then

\[
Y_r^{\otimes n} \otimes  \mathbb{I}_{\Y_r^{\otimes n} }  = (Y_r \otimes   \mathbb{I}_{\Y_r}) ^{\otimes n} \geq P_1 ^ {\otimes n} = \sum_{\left( i_1, \ldots, i_n \right) \in \Sigma_{\geq n}^n} P_{i_1} \otimes \ldots \otimes P_{i_n}
\]

\item[$\bullet$] If $0 < k < n$, we can use induction on $n$, in the same way as in our calculation of the value of the solution. We have then that, using the fact that $S^{n-1,k}$ is feasible for the corresponding program, the value at the left hand side of the last constraint is 

\begin{eqnarray*}
 &  \geq & R_r \otimes  \mathbb{I}_{\Y_r} \otimes  \sum_{\left( i_1, \ldots, i_{n-1} \right)\in \Sigma_{\geq k}^{n-1}}  P_{i_1} \otimes \ldots \otimes P_{i_{n-1}}  \\
 & & + Y_r  \otimes \sum_{\left( i_1, \ldots, i_{n-1} \right)\in \Sigma_{k-1}^{n-1}}  f_i(i_1) \otimes \ldots \otimes f_i(i_{n-1})  \otimes \mathbb{I}_{\Y_r^{\otimes n}}   \\
  & \geq  & (P_0 + P_1) \otimes  \sum_{\left( i_1, \ldots, i_{n-1} \right)\in \Sigma_{\geq k}^{n-1}}  P_{i_1} \otimes \ldots \otimes P_{i_{n-1}} + P_1 \otimes \sum_{\left( i_1, \ldots, i_{n-1} \right)\in \Sigma_{k-1}^{n-1}}  P_{i_1} \otimes \ldots \otimes P_{i_{n-1}}    \\
  & =  & \sum_{\left(i_1, \ldots, i_{n} \right) \in \{0,1\} \times \Sigma_{\geq k}^{n-1} \cup \{1\} \times  \Sigma_{k-1}^{n-1}  }  P_{i_1} \otimes \ldots \otimes P_{i_{n}}   \\
\end{eqnarray*}

Decomposing $\Sigma_{\geq k}^n$ into two subsets, the one with at least $k$ $1$s in the last $n-1$ symbols, and the one with exactly $k-1$ $1$s in the last $n-1$ symbols, we have that the last formula in our chain of inequalities is indeed

\[
 \sum_{\left( i_1, \ldots, i_n \right) \in \Sigma_{\geq k}^{n}}  P_{i_1} \otimes \ldots \otimes P_{i_n},
\]

\noindent so our solution to Dual Problem 5 satisfies the last constraint, as desired.
 
\end{mylist}

\newpage

\chapter{Error reduction for interactive proof systems}\label{ch:error}

An interactive proof system is a situation in which an object belonging to one of two disjoint sets $ \left( L_{\text{yes}},L_{\text{no}} \right)$ is known to two parties, one of which is trying to convince the other that the object belongs in $L_{yes}$. If the object does indeed belong in $L_{yes}$, the probability that the second individual is successfully convinced will be higher than if it does not. As it is standard to do in theoretical computer science, we will assume that the objects in the sets $(L_{\text{yes}}, L_{\text{no}})$ are modelled as binary strings. Several variations of this setting have been widely studied in complexity theory (see e.g. \cite{BabaiM88,GoldwasserMR89} for two foundational papers in the area), as it is possible to defined complexity classes in terms of sets of objects for which such an interaction exists.  

The model for these interactions is similar to the one we have considered in this thesis. In this chapter, we will then go back to referring to the party we have called Alice as the \text{verifier} and the party we have called Bob as the \emph{prover}. There will be two outcomes for the interaction between them, one of them called the \emph{accepting} outcome, and the other one called the \emph{rejecting} outcome. We say that the verifier accepts whenever the outcome of the interaction is the accepting outcome, and that the verifier rejects whenever the outcome of the interaction is the rejecting outcome. We also place the additional restriction that the process by which the verifier operates must be an efficient process, so its computational ability is restricted to  quantum  (or probabilistic, in the classical case) polynomial time in the size of the shared object. The prover's computational ability is still unrestricted.

For an interactive proof system to be good, it should be possible for the verifier to make a reasonable guess about whether $x \in L_{\text{yes}}$ from the outcome of the interaction. Then, we say that an interactive proof is valid for a problem specified by $(L_{\text{yes}}, L_{\text{no}})$, with parameters $\alpha$ and $\beta$, $\beta < \alpha$, whenever 

\begin{mylist}{\parindent}
\item[1.]
  If $x \in L_{\text{yes}}$, it is possible for the prover to
  convince the verifier to accept with probability at least $\alpha$. This is called the \emph{completeness} condition, and corresponds to the condition  in formal logic that true statements can be proved. $1-\alpha$ is called then the \emph{completeness error}.

\item[2.]
    If $x \in L_{\text{no}}$, the verifier always accepts with
  probability at most $\beta$, regardless of the prover's actions. This is called the \emph{soundness} condition, and corresponds to the condition in formal logic that false statements cannot be proved. $\beta$ is called then the \emph{soundness error}.

\end{mylist}

We might have for example that $\alpha = 1/2+\delta$ and $\beta=1/2-\delta$, for some small $\delta > 0$. However, the verifier would be able to make a  better guess about whether $x \in L_{\text{yes}}$ from the outcome of the interaction if we had $\alpha = 1 - \epsilon$ and $\beta = \epsilon$, for a small value of $\epsilon>0$. The process of specifying a new interactive proof system from another one in a way that improves on the value of $\alpha$ and $\beta$ is called then  \emph{error reduction}.

A natural procedure to perform for error reduction would be the same one that is usually performed in the case of probabilistic algorithms. That is, the verifier could repeat the interaction several times, and accept if and only if the number of  accepting outcomes that are obtained is above a certain threshold. In the situation under consideration, one is to understand that it is important for the new verifier to run these independent tests in parallel (as opposed to requiring the prover to respond sequentially to the individual tests). A motivation for this comes for the fact that in the complexity classes defined in terms of quantum interactive proofs, the number of rounds is often considered to be a fixed constant, so one increasing the number of rounds might not be a possibility.

However, the analysis that proves that this intuitive procedure works in the case of probabilistic algorithms relies in the fact that the analysis of the different repetitions can be made in an independent way. It is not clear that we could do this in our analysis, since as we saw in Chapter \ref{ch:nindep},  it might not be optimal for a hypothetical prover that interacts with many independent executions of an interactive proof system to respect the independence of these executions when the objective of the prover is to get a number of accepting outcomes past a certain threshold.

 Note however that, as we saw in Chapter \ref{ch:nindep}, in the classical case we can indeed assume that the prover respects the independence of the executions. And it is indeed well-known in that case that the same argument that is used for probabilistic algorithms can be extended, and error reduction based on parallel repetition and a threshold value computation works perfectly
for (single-prover) interactive proof systems.  \footnote{%
  The situation is very different for \emph{multi-prover} interactive
  proof systems, wherein the subject of parallel repetition is
  complicated \cite{Raz98,Holenstein09,Raz08}.}  By this we mean that not only parallel repetition and a threshold value computation can be used for error reduction,  but that as it follows from the behaviour in the classical case that we described in Chapter \ref{ch:nindep}, we have the stronger statement that if $p$ is the optimal probability for the original verifier to obtain an accepting outcome for some $x$, then the optimal probability to cause at least $t$ acceptances out of $k$ independent repetitions of the original interaction is
\[
\sum_{j = t}^k \binom{k}{j} p^j (1 - p)^{k-j}.
\]

\noindent Using this and standard Chernoff bounds, we have that our suggested strategy for error reduction does quickly reduce the error. Our example in Chapter \ref{ch:nindep} shows that this perfect behaviour for parallel repetition does not always hold in the quantum case. However, it might still be the case that an strategy based on parallel repetition and a threshold value computation can be used for error reduction. This would provide a simpler strategy for performing error reduction in quantum interactive proofs that the ones that are known in the literature \cite{JainUW09,KitaevW00}. 

We will show now then, using our results from Chapter \ref{ch:ub}, that the natural procedure that we suggest for error reduction does indeed work for a certain range of values for the $\alpha$ and $\beta$ parameters. More formally, we prove the following Theorem:

\begin{theorem}  \label{th:er}
Let the parameters $\alpha$ and $\beta$ for a quantum interactive proof system be constant real numbers, with $0 \leq \beta < 2^{-\frac{H(\alpha)}{\alpha}} < \alpha \leq 1$. Then, a strategy based on parallel repetition followed by a threshold value computation will bring the soundness and completeness errors below $\epsilon$ in $O(\log \frac{1}{\epsilon})$ rounds.
\end{theorem}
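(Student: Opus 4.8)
The plan is to analyze the proposed verifier --- run $n$ independent copies of the base protocol in parallel and accept if and only if at least $k = \lceil cn \rceil$ of them accept --- by bounding its completeness error with a Chernoff bound and its soundness error with the inequality $p' \leq p^k {n \choose k}$ from Chapter \ref{ch:ub}. The first and most delicate step is to fix the accepting fraction $c$. I would demand both $c < \alpha$, which will drive the completeness analysis, and $\beta < 2^{-H(c)/c}$, which will drive the soundness analysis. Writing $g(c) = 2^{-H(c)/c}$, the function $g$ is continuous near $c = \alpha \in (0,1]$ and satisfies $g(\alpha) = 2^{-H(\alpha)/\alpha} > \beta$ by hypothesis; hence $g(c) > \beta$ persists on a left neighbourhood of $\alpha$, and I may fix any $c$ in that neighbourhood with $c < \alpha$. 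This $c$ depends only on the constants $\alpha,\beta$, so it is chosen once and for all.

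For completeness, when $x \in L_{\text{yes}}$ the prover can accept each single copy with probability at least $\alpha$ and may play this strategy independently across the $n$ copies, so the number of accepting copies stochastically dominates a $\mathrm{Binomial}(n,\alpha)$ variable. Since $c < \alpha$, a standard Chernoff lower-tail estimate bounds the probability of seeing fewer than $cn$ acceptances by $e^{-\Omega(n)}$, making the new completeness error exponentially small in $n$.

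For soundness, when $x \in L_{\text{no}}$ the single-copy optimal acceptance probability $p$ satisfies $p \leq \beta$. Identifying ``winning'' with ``accepting'' and invoking the bound from Chapter \ref{ch:ub}, the optimal probability $p'$ that any prover --- including one that correlates its answers across copies --- forces at least $k$ acceptances out of $n$ is at most $p^k {n \choose k}$, which, being increasing in $p$, is at most $\beta^{cn} {n \choose cn}$. The estimate ${n \choose cn} \leq 2^{nH(c)}$ then gives $p' \leq 2^{n(c\log_2\beta + H(c))} = 2^{-n(c\log_2(1/\beta) - H(c))}$, and the exponent is strictly negative exactly because $\beta < 2^{-H(c)/c}$ is equivalent to $c\log_2(1/\beta) > H(c)$. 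Hence the soundness error is also $e^{-\Omega(n)}$.

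Combining the two estimates, both errors fall below $\epsilon$ once the number of parallel repetitions is $n = O(\log(1/\epsilon))$ (run in parallel, so that the round complexity is unchanged from the base protocol). I expect the genuine content to lie entirely in two places: first, in the nonemptiness of the admissible threshold window $\{\, c < \alpha : g(c) > \beta \,\}$, which is precisely what the hypothesis $\beta < 2^{-H(\alpha)/\alpha} < \alpha$ secures; and second, in the use of the sharp soundness bound $p^k {n \choose k}$ from Chapter \ref{ch:ub} rather than a cruder estimate, since it is this bound, matched against the entropy estimate for ${n \choose cn}$, that produces exactly the threshold condition $\beta < 2^{-H(c)/c}$ appearing in the statement. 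The two tail bounds themselves are routine once $c$ is fixed.
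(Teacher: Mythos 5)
Your proposal is correct and takes essentially the same route as the paper's proof: the same threshold $k \approx cn$ with $c < \alpha$ fixed via continuity of $H(c)/c$ so that $\beta < 2^{-H(c)/c}$, the same independent-strategy Chernoff bound for completeness, and the same invocation of the soundness bound $p^k \binom{n}{k}$ from Chapter \ref{ch:ub}. Your only deviation is estimating $\binom{n}{\lceil cn\rceil} \leq 2^{nH(c)(1+o(1))}$ directly via the standard entropy bound, which cleanly packages (and streamlines) the paper's lengthier Stirling-type computation with rational $c$ and floor bookkeeping, arriving at the same decay exponent $c \lg(1/\beta) - H(c) > 0$.
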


\begin{proof}
Let $p$ be the optimal probability for the prover to obtain an accepting outcome with the case, and $c$ be a constant rational number $\frac{c1}{c2}$ strictly smaller than $\alpha$ (we will further restrict the value of $c$ later). We will let the threshold for the error reduction procedure be $k=\floor{c n}$. 

We start by looking at the completeness error, corresponding to the situation in which $x \in L_{\text{yes}}$ and $p \geq \alpha$. Consider an strategy for the prover which just plays the optimal strategy for a single repetition independently in each of the independent interactions. The probability that this strategy obtains a given number of accepting outcomes will be given then by a binomial distribution with parameters $p$ and $n$, and this distribution follows the Chernoff bound

\[
P(X \leq pn(1-\lambda) ) \leq \exp \left( -\frac{pn \lambda^2 }{2} \right)
\]

As the probability that the number of accepting outcomes falls below the threshold is equal to the probability that it is at most $cn$, we have then that this probability is bounded by

\[
\exp \left( \frac{-pn \left( 1 - \frac{cn}{pn} \right)^2}{2} \right) = \exp \left( \frac{-pn \left( 1 - \frac{c}{p} \right)^2}{2} \right)
\]

As $c < \alpha \leq p $, this is an exponentially decreasing function of $n$, so it is indeed enough to repeat the interaction in parallel $O(\log \frac{1}{\epsilon})$ times to obtain a completeness error below $\epsilon$.

We look now at the soundness error,  corresponding to the situation in which $x \in L_{\text{no}}$ and $ p \leq \beta$.  Then, from our results in Chapter \ref{ch:ub}, we have that the probability that the prover can obtain a number of accepting outcomes above the threshold is upper bounded by

\[
p^{k} {n \choose k }
\]

To analyze this expression, we take its logarithm, which is equal to

\[
 k \lg p + \lg n! - \lg k! - \lg \left( n- k \right)!
\]

Now, we can obtain lower and upper bounds for $\lg n! = \sum_{i=1}^n \lg i$ by integrating $\lg$. The lower bound is $n \lg n - \frac{n+1}{\ln 2}$, while the upper bound is $(n+1) \lg (n+1) - \frac{n}{\ln 2}$.  Using these bounds, we obtain an upper bound on the previous expression of

\begin{eqnarray*}
& &  p \lg k  + (n+1) \lg( n+1) -\frac{n}{\ln 2} - k \lg k +\frac{k+1}{\ln 2} -(n-k)\lg(n-k) + \frac{n-k+1}{\ln 2}   \\
& = &   p \lg k  + \lg (n+1) + n\lg( n+1) -  k  \lg  k  -(n- k )\lg(n- k ) + \frac{2}{\lg 2}
\end{eqnarray*}

If we write $k=\floor{cn} = \floor{ \frac{c_1}{c_2} n }$ as $ cn - \frac{c_1 n \mod c_2}{c_2} = n \left(c - \frac{c_1 n \mod c_2}{c_2 n}\right)$, this is equal to 

\begin{eqnarray*}
& &   \left( cn - \frac{c_1 n \mod c_2}{c_2} \right) \lg p + \lg(n+1) + n\lg n + n \lg(1 + \frac{1}{n}) - \left( cn - \frac{c_1 n \mod c_2}{c_2} \right)   \lg n  \\\
& & -  \left( cn - \frac{c_1 n \mod c_2}{c_2} \right)  \lg \left( c - \frac{c_1 n \mod c_2}{c_2n} \right) -  \left( (1-c)n + \frac{c_1 n \mod c_2}{c_2} \right)   \lg n   \\
& & -     \left( (1-c)n + \frac{c_1 n \mod c_2}{c_2} \right)   \lg  \left( (1-c) + \frac{c_1 n \mod c_2}{c_2n} \right) + \frac{2}{\lg 2} \\
\end{eqnarray*}

We can see that the terms in $n \log n$ cancel each other, and the previous expression can then be written as 

\begin{eqnarray*}
 & &n \left[ c \lg p + \lg(1 + \frac{1}{n})  -c  \lg \left( c - \frac{c_1 n \mod c_2}{c_2n} \right)  - (1-c) \lg  \left( (1-c) + \frac{c_1 n \mod c_2}{c_2n}  \right) \right] \\
& &-   \frac{c_1 n \mod c_2}{c_2}  \lg p + \lg(n+1) + \frac{c_1 n \mod c_2}{c_2}   \lg \left( c - \frac{c_1 n \mod c_2}{c_2n} \right) \\
 & &-  \frac{c_1 n \mod c_2}{c_2}   \lg  \left( (1-c) + \frac{c_1 n \mod c_2}{c_2n} \right) \\
\end{eqnarray*}

\noindent that is, as 

\begin{eqnarray*}
n \left[ c \lg p+  \lg(1 + \frac{1}{n})  -c  \lg \left( c - \frac{c_1 n \mod c_2}{c_2n} \right)  - (1-c) \lg  \left( (1-c) + \frac{c_1 n \mod c_2}{c_2n}  \right) \right]  + o(n)
\end{eqnarray*}

As $n$ goes to infinity, the coefficient for $n$ goes to $c \lg p -H(c)$, which will be negative if $\frac{H(c)}{c} > \lg p$.  Now, we have $\lg p \leq \lg \beta < \frac{H(\alpha)}{\alpha} $. As $\frac{H(x)}{x}$ is a continuous function in the interval $(0,1]$, we can then pick $c$ as a constant close enough to $\alpha$ that $\lg \beta < \frac{H(c)}{c}$, and therefore $\frac{H(c)}{c} > \lg p$.  We have then that there is a positive integer constant $n_1$ and a positive real constant $\lambda_1$  such that for all $n \geq n_1$ , the coefficient for $n$ in the previous expression is upper-bounded by $\lambda_1$. Taking also into account the $o(n)$ term, we have then that for $n$ greater or equal than a constant $n_2$, the logarithm of our bound of the soundness error  is upper bounded by $-\lambda_2 n$ for some positive real $\lambda_2$, obtaining then the asymptotic result that the soundness error can be reduced to $\epsilon$ in $O(\log \frac{1}{\epsilon})$ rounds.

\end{proof}

\begin{figure}[t]
  \begin{center}
 	\includegraphics[height=5cm]{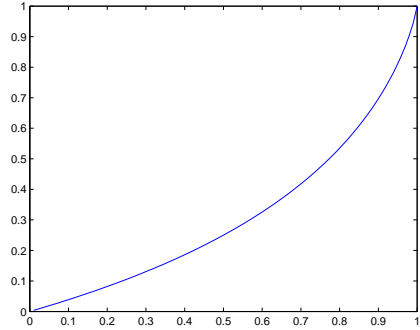}
  \end{center}
  \caption{Graph for $2^\frac{-H(x)}{x}$}
  \label{fig:entropy}
\end{figure}

We can see in Figure \ref{fig:entropy} that the condition $\beta < 2^{-\frac{H(\alpha)}{\alpha}}$ is meaningful, in the sense that there seems to be a wide range of values of $\beta$ and $\alpha$ for which it holds that $\beta < 2^{-\frac{H(\alpha)}{\alpha}}$. Indeed,  it is possible to prove  that $ 2^{-\frac{H(\alpha)}{\alpha}} > \alpha/3$, so for $\beta < \alpha/3$ the condition $\beta < 2^{-\frac{H(\alpha)}{\alpha}}$  will hold.

\newpage

\chapter{Conclusion}\label{ch:concl}

This thesis has considered several questions related to the parallel repetition of a simple kind of interaction, broadly centered around the relevance of correlations arising in the quantum information theoretic versions of these interactions.

We have seen in Chapter \ref{ch:avg} how the presence of those correlations does not affect the optimality for Bob of acting independently in the different repetitions whenever he associates a value to each outcome, and is trying to optimize the value obtained by repetition. However, when Bob is trying to force a certain outcome to occur a number of times past a given threshold, then we have seen in Chapter \ref{ch:nindep} that the correlations that are possible between the actions of Bob for different repetitions can give rise to a strikingly non-classical hedging type of behaviour.

Our work may have then relevance in settings considered in cryptography,
where certain protocols might very well be abstracted as tests of the
sort we have considered (this is the case, for example, for quantum money  \cite{MolinaVW12}, and quantum coin-flipping \cite{GutoskiW07}). The extent to which a dishonest individual can attack such protocols
by correlating independent executions is an important security
consideration that some would-be cryptographers might fail to
consider. Our results in Chapter \ref{ch:nindep} demonstrate then that quantum attacks to such protocols may
exhibit striking non-classical and counter-intuitive properties, and
should therefore be given very careful consideration.

We have also  established in Chapter \ref{ch:nindep}  certain conditions, including the classical case, under which it is optimal for Bob to act independently in different repetitions. Then, we have established in Chapter \ref{ch:ub} certain general quantitative bounds to the hedging type of behaviour that we have observed. And finally, we have discussed in Chapter \ref{ch:error} the connection of our results with certain techniques for error reduction in quantum interactive proofs.

Some questions left open in our work and that remain to be answered are the following ones:

\begin{mylist}{\parindent}

\item[1. ] 
Is it possible to use separability properties of the operators in our semidefinite programs to obtain bounds on hedging behaviours?

\item[2. ]  
Is it possible to improve our bounds in Chapter \ref{ch:ub} concerning  the optimum probability for Bob of achieving a winning outcome in at least $k$ of the  $n$ interactions? 

\item[3. ]  
If the answer to the previous question is positive, can these improvements be used to prove that the naive way of parallel repetition discussed in Chapter \ref{ch:error} does always work?

\item[4. ]  
If the answer to the previous questions is positive, how fast does the naive way of parallel repetition reduce the error?

\end{mylist}

Ideally, there would be an exponential decay in the optimum probability for Bob of achieving a winning outcome in a fraction of the interactions above $p n$, as a function of the number of repetitions $n$. That would answer both of the first questions in a positive way, and would establish that the naive way of parallel repetition does indeed reduce the error quickly in an asymptotical sense.

%-----------------------------------------------------------------------------%
\cleardoublepage
\phantomsection
\addcontentsline{toc}{chapter}{Bibliography}
\bibliographystyle{alpha}
\bibliography{thesis}
%-----------------------------------------------------------------------------%

\end{document}